\newcommand{\beq}{\begin{eqnarray}}
\newcommand{\eeq}{\end{eqnarray}}
\newcommand{\mean}[1]{\ensuremath{\langle{#1}\rangle}}
\newcommand{\bounded}[2]{
\mathcal{B}(\mathbb{C}^{#1}\otimes\mathbb{C}^{#2})}
\newtheorem{observation}{Observation}
\newtheorem{theorem}[observation]{Theorem}
\def\1{\mathbf{1}}
\def\0{\mathbf{0}}
\def\H{\mathcal{H}}
\def\fsub{\subset_{\mathrm{L}}}
\def\fsup{\supset_{\mathrm{L}}}
\def\Tb{{T_{\mathrm{B}}}}
\def\Ta{{T_{\mathrm{A}}}}
\def\lB{_{\mathrm{B}}}
\def\lA{_{\mathrm{A}}}
\newtheorem{observation }{Observation}
\begin{document}

\title{Scaling Bound Entanglement through Local Extensions}

\author{Robin Krebs}
  \email{robinbenedikt.krebs@stud.tu-darmstadt.de}
 \affiliation{
 Department of Computer Science, Technical University of Darmstadt, Germany
}
\author{Mariami Gachechiladze}%
 \email{mariami.gachechiladze@tu-darmstadt.de}
\affiliation{
 Department of Computer Science, Technical University of Darmstadt, Germany
}

\begin{abstract}
Entanglement is a central resource in quantum information science, yet its structure in high dimensions remains notoriously difficult to characterize. One of the few general results on high-dimensional entanglement is given by peel-off theorems, which relate the entanglement of a state to that of its lower-dimensional local projections. We build on this idea by introducing \emph{local extensions}, the inverse process to peel-off projections, which provide a systematic way to construct higher-dimensional entangled states from lower-dimensional ones. This dual perspective leads to general bounds on how the Schmidt number can change under projections and extensions, and reveals new mechanisms for generating bound entangled states of higher dimensionality. As a concrete application, we construct a positive-partial-transpose state of Schmidt number three in local dimensions $4\times 5$, the smallest system known to host such entanglement. We further extend this approach to identify an elegant family of \textit{generalized grid states} with increasing Schmidt number, including explicit examples of a $7\times 7$ state with Schmidt number four and a $9\times 9$ state with Schmidt number five, suggesting $(d+1)/2$ scaling in odd local dimensions $d\times d$. Taken together, our results provide a constructive toolkit for probing the scaling of bound entanglement in high dimensions.
\end{abstract}
\maketitle
\textit{Introduction -- } Quantum entanglement is a defining feature of quantum mechanics and a central resource 
for quantum technologies, underpinning applications such as quantum key distribution~\cite{ekert1991quantum,gisin2002quantum},  networks~\cite{kimble2008quantum,wehner2018quantum},  computation~\cite{raussendorf2001one},  metrology~\cite{giovannetti2004quantum}, etc. While a sophisticated mathematical theory of entanglement 
has been developed~\cite{horodecki2009quantum,plenio2005introduction,guhne2009entanglement}, many fundamental questions remain open. In particular, deciding 
whether a state is separable is NP-hard~\cite{gurvits2004classical,gurvits2003classical,gharibian2008strong}, with efficient criteria available only in 
low-dimensional cases, such as the qubit-qubit and qubit-qutrit systems,~\cite{peres1996separability, horodecki1997separability} owing to a well-known structure of positive maps in these dimensions~\cite{woronowicz1976positive,stormer2013decomposition}. For higher dimensions, no general 
method for entanglement detection is known.  At the same time, experimental progress has made high-dimensional entanglement accessible, with advantages such as improved noise resistance and enhanced performance in communication and computation tasks~\cite{kaszlikowski2000violations,howell2002experimental,mair2001entanglement,krenn2017orbital,erhard2020advances,hu2020experimental}. A central challenge is thus to certify whether 
a given experiment produces genuinely high-dimensional entanglement, or whether its 
results admit a lower-dimensional explanation~\cite{kraft2018characterizing}. A natural measure for this purpose is 
the Schmidt number, which quantifies the number of degrees of freedom across which 
a state is entangled~\cite{terhal2000schmidt,guhne2002detection,lewenstein2000optimization}. However, estimating the Schmidt number is notoriously difficult, 
and only a few methods with limited applicability are known.  

A particularly intriguing class of states are those with positive partial transpose (PPT)~\cite{peres1996separability}. These bound entangled 
states cannot be distilled into pure entanglement and were long regarded as of limited 
use~\cite{horodecki1998mixed}. This view has changed: PPT states have been shown to provide advantages in tasks 
such as steering~\cite{moroder2014steering}, nonlocality~\cite{vertesi2014disproving}, and secure communication~\cite{horodecki2005secure}, and families with increasing 
Schmidt number have been constructed~\cite{bennett1999unextendible,chen2006quantum,huber2018high,pal2019class,krebs2024high}. Yet the existence and structure of high Schmidt number PPT 
states remain poorly understood, and systematic methods for their construction are 
largely absent. Progress in this direction would not only improve our ability to certify entanglement in 
practical settings, but also shed light on the fundamental interplay between 
entanglement dimensionality and positivity. In particular, the existence of high 
Schmidt number PPT states is tied to open problems on indecomposable $k$-positive maps 
\cite{woronowicz1976positive,skowronek2009cones,mlynik2025characterization}, and relates 
to broader questions such as the PPT$^2$-conjecture \cite{chen2019positive} and the 
possibility of non-PPT bound entanglement \cite{muller2016positivity}.

In this work, we introduce the method of \emph{local extensions}, the natural dual of 
peel-off projections, which provides a constructive framework for generating 
higher-dimensional entangled states while preserving the PPT property. Peel-Off 
theorems and related projection properties have played an important role in the 
literature~\cite{yang2016all} by constraining how much entanglement can survive under local dimension 
reduction, thereby giving one of the few general insights into high-dimensional 
entanglement. Our approach complements this line of work by showing how the 
converse operation, extending local dimensions, can be used to systematically 
construct new families of PPT states with large Schmidt number. This perspective 
yields general bounds on how the Schmidt number transforms under projections and 
extensions, establishing a structural principle for controlled increases in 
entanglement dimensionality.

Building on this framework, we construct a PPT state of 
Schmidt number three in local dimensions $4 \times 5$, the smallest system known to 
host such entanglement, and develop generalized grid states whose iterative extensions 
yield explicit examples with Schmidt numbers four and five in dimensions $7 \times 7$ and 
$9 \times 9$, respectively. Together, these results improve the best known 
dimensional thresholds and suggest an underlying $(d+1)/2$ scaling of Schmidt number in 
odd $d \times d$ systems. Altogether, our findings position local extensions as a 
systematic toolkit for constructing and certifying high-dimensional bound entangled 
states, and advance the broader effort to chart the structure of entanglement in 
mixed-state quantum systems, especially in the context of distillability and incompressibility of quantum resources~\cite{weinbrenner2024superactivation}.

\textit{Preliminaries --}  A quantum state is a positive semidefinite operator $\rho\geq 0$ in a finite dimensional Hilbert space. 
Bipartite quantum states  $\rho\in\mathcal{B}(\mathcal{H}_\mathrm{A}\otimes \mathcal{H}_\mathrm{B})$ with $\rho^{\Tb} := (\mathrm{id}_{\mathrm{A}}\otimes T_B)(\rho_\mathrm{AB})\ge 0$ are termed positive partial transpose, or shortly, PPT states~\cite{peres1996separability}. All states that do not have this property are entangled~\cite{choi1972positive}. 
PPT states are a subset of bound entangled states from which no Bell pair can be distilled using arbitrary operations of the form $(A\otimes B)\rho(A\otimes B)^\dagger$, which are called
SLOCC-operations~\cite{vidal1999entanglement}. 
We refer to SLOCC operations with isometric $A^\dagger\in \mathcal{B}(\mathbbm{C}^m,\mathcal{H}_\mathrm{A})$, $B^\dagger\in \mathcal{B}(\mathbbm{C}^n,\mathcal{H}_\mathrm{B})$ as projections onto a local block, i.e. a subspace of the form $\mathcal{U}\otimes\mathcal{V}= (A^\dagger \mathbbm{C}^m)\otimes(B^\dagger \mathbbm{C}^n)\subseteq \mathcal{H}_\mathrm{A}\otimes \mathcal{H}_\mathrm{B}$. 
The PPT set is further closed under arbitrary tensor products and SLOCC operations~\cite{clarisse2005characterization}.

For a bipartite  state $\rho$, the Schmidt number measures the dimensionality of entanglement, defined as
\begin{align}
\begin{split}
    \mathrm{SN}(\rho):= & \min~\  k \\
    \mathrm{s.t. }~ \rho =& \sum_i p_i\ketbra{\psi_i}{\psi_i},\  \mathrm{SR}(\ket{\psi_i})\le k,\ ~p_i>0,
\end{split}
\end{align}
where Schmidt Rank (SR)
corresponds to the number of nonzero Schmidt coefficients. $\mathrm{SN}(\rho)=1$ iff $\rho$ is separable, which implies the  complexity of computing Schmidt numbers is at least as hard as deciding separability. 
A map $\phi$ is termed $k$-positive, if for every $\rho$ with $\mathrm{SN}(\rho)\le k$, the following holds: $(\phi_\mathrm{A}\otimes \mathrm{id}_\mathrm{B})(\rho)\ge 0$.

\textit{Local extension method for high Schmidt number states--} High Schmidt number PPT states known in literature~\cite{pal2019class,krebs2024high}, have an interesting nested structure: they contain local PPT blocks of decreasing Schmidt number. In this section, we introduce a reverse concept, which we call \textit{local extension} $\rho$ of $\rho_c$, and denote it by $\rho_c\fsub \rho$. We say  $\rho_c\fsub \rho$, if $\rho_c$ can be obtained by projecting to a local block of $\rho$ using SLOCC operators. The local extension  construction will help to find high Schmidt number PPT states.

\begin{observation}~\label{obs:matrioshka}
    Every Schmidt number $k$ state $\rho^{(k)}$ is the local extension of a Schmidt number $(k-1)$ state, that is
    \begin{equation}
        \rho^{(1)}\fsub \dots \fsub \rho^{(k-1)}\fsub \rho^{(k)}
    \end{equation}
    exists, satisfying $\mathrm{SN}(\rho^{(i)})=i$.
\end{observation}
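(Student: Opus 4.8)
The plan is to reduce the statement to one quantitative bound — that removing a single local dimension can lower the Schmidt number by at most one — and then extract the whole chain from it by a discrete intermediate‑value argument. I would first record the routine ingredients: restriction to a local block cannot increase the Schmidt number, since $(P\otimes Q)\ketbra{\psi}{\psi}(P\otimes Q)$ has Schmidt rank at most that of $\ket\psi$; the relation $\fsub$ is transitive, because a local block of a local block is again a local block; and every nonzero bipartite state has $1\times1$ local blocks, which have Schmidt number $1$. The crux is then the following one‑dimension bound: if $P$ is the orthogonal projector onto a codimension‑one subspace of $\mathcal H_A$ (or of $\mathcal H_B$), then $\mathrm{SN}(\sigma)\le \mathrm{SN}\big((P\otimes\openone)\sigma(P\otimes\openone)\big)+1$.

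To prove this, split $\mathcal H_A=\mathcal U\oplus\mathbbm{C}\ket\phi$ with $\mathcal U=P\mathcal H_A$ and write $\sigma$ in the associated $2\times2$ block form with corners $\sigma_{11},\sigma_{12},\sigma_{22}$, so that $\sigma_{11}$, viewed on $\mathcal U\otimes\mathcal H_B$, is exactly the restricted state, of Schmidt number $m'$ say. Pick a decomposition $\sigma_{11}=\sum_\ell r_\ell\ketbra{\xi_\ell}{\xi_\ell}$ with $\mathrm{SR}(\xi_\ell)\le m'$ and lift each term to $\ket{\tilde\xi_\ell}=\sqrt{r_\ell}\,\ket{\xi_\ell}+\ket\phi\otimes\ket{\eta_\ell}$, choosing the $\ket{\eta_\ell}\in\mathcal H_B$ from the Schur‑complement data of $\sigma$ so that $\sum_\ell\ketbra{\tilde\xi_\ell}{\tilde\xi_\ell}$ carries $\sigma_{11},\sigma_{12}$ in its top row and $\sigma_{21}\sigma_{11}^{+}\sigma_{12}$ in its bottom‑right corner (concretely: arrange the $\sqrt{r_\ell}\ket{\xi_\ell}$ as the columns of a matrix $X$ with $XX^\dagger=\sigma_{11}$ and take the $\ket{\eta_\ell}$ to be the columns of $\sigma_{21}(X^{\dagger})^{+}$; $\sigma\ge0$ guarantees $\range(\sigma_{12})\subseteq\range(\sigma_{11})$, which is what makes these pseudoinverse manipulations legitimate). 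Again because $\sigma\ge0$, the Schur complement $\Delta:=\sigma_{22}-\sigma_{21}\sigma_{11}^{+}\sigma_{12}$ is positive semidefinite, and one gets $\sigma=\sum_\ell\ketbra{\tilde\xi_\ell}{\tilde\xi_\ell}+\ketbra{\phi}{\phi}\otimes\Delta$. Since $\ket\phi\perp\mathcal U$, subadditivity of the Schmidt rank gives $\mathrm{SR}(\tilde\xi_\ell)\le m'+1$, while $\ketbra{\phi}{\phi}\otimes\Delta$ is separable; thus this exhibits $\sigma$ as a mixture of pure states of Schmidt rank at most $m'+1$, so $\mathrm{SN}(\sigma)\le m'+1$.

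With the one‑dimension bound established, the chain is immediate. Starting from $\rho^{(k)}$, successively project out one local dimension at a time — alternating the two parties as needed and always keeping the block nonzero — until arriving at a $1\times1$ block; this yields a nested family of local blocks $\rho^{(k)}=B_0$ with $B_{i+1}\fsub B_i$. The sequence $\mathrm{SN}(B_i)$ is nonincreasing in $i$, equals $k$ at the start and $1$ at the end, and drops by at most one at each step, so it assumes every value in $\{1,\dots,k\}$. Picking, for each $i$, an index $j_i$ with $\mathrm{SN}(B_{j_i})=i$ (arranged so that $j_1>\dots>j_k$) and setting $\rho^{(i)}:=B_{j_i}$, transitivity of $\fsub$ gives $\rho^{(1)}\fsub\cdots\fsub\rho^{(k)}$ with $\mathrm{SN}(\rho^{(i)})=i$; in particular $\rho^{(k-1)}\fsub\rho^{(k)}$, i.e.\ $\rho^{(k)}$ is a local extension of a Schmidt‑number‑$(k-1)$ state.

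The whole difficulty sits in the one‑dimension bound. A priori an SLOCC projection can crush the Schmidt number all the way to $1$ (projecting onto a product subspace does exactly that in a single move), so the real point is that shaving off one dimension cannot cost more than one unit, and the reconstruction above is what makes that precise — the only Schmidt direction it adds back is the single product direction $\ket\phi$ carrying the correction $\ketbra{\phi}{\phi}\otimes\Delta$. Everything else is bookkeeping: the range condition and $\Delta\ge0$ (both consequences of $\sigma\ge0$), and the harmless degenerate case $\sigma_{11}=0$, which anyway never arises along a chain of nonzero blocks.
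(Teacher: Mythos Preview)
Your argument is correct, but it takes a different route from the paper's. The paper proves Observation~1 via duality with $k$-positive maps: it invokes the Choi decomposition theorem (from Ref.~\cite{yang2016all}) to write any $k$-positive witness $\phi$ as $\psi+\gamma\circ\eta$ with $\psi$ completely positive, $\gamma$ $(k{-}1)$-positive, and $\eta$ a local compression; since $\phi$ detects $\rho$ but $\psi$ cannot, $\gamma$ must detect the compressed state $(\eta\otimes\mathrm{id})(\rho)$, forcing its Schmidt number to be at least $k{-}1$, and iteration produces the chain. Your proof instead establishes the one-dimension drop bound directly via the Schur-complement lift of a Schmidt decomposition of $\sigma_{11}$ --- which is exactly the content and method of the paper's \emph{later} Theorem~2 --- and then extracts the chain by a discrete intermediate-value argument along a sequence of codimension-one projections. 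In effect you have reordered the paper's logic: you prove Theorem~2 first and deduce Observation~1 from it, whereas the paper presents Observation~1 via an external structural result on positive maps and only afterwards gives the direct Schur-complement strengthening. Your route is more self-contained (no appeal to \cite{yang2016all}); the paper's route makes the duality with $k$-positive maps explicit.
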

\begin{proof}
This result relies on the Choi decomposition theorem that we state here for completeness~\cite{yang2016all}. Every $k$-positive map $\phi$ from $\mathcal{B}(\mathbb{C}^{m})$ to $\mathcal{B}(\mathbb{C}^{n})$, $m,n\ge 1$, $k\geq 2$ can be decomposed as $\psi + \gamma\circ \eta$, where $\psi$ is completely positive, $\gamma$ is a $(k-1)$-positive map from $\mathcal{B}(\mathbb{C}^{m-1})$ to $\mathcal{B}(\mathbb{C}^{n})$ and $\eta(\rho) = V\rho V^\dagger$, where $V^\dagger$ is an isometry from $\mathbb{C}^{m-1}$ into $\mathbb{C}^{m}$.

Suppose $(\phi\otimes\mathrm{id}) (\rho)\not\ge 0$ for $\mathrm{SN}(\rho)=k+1$. Since $(\psi\otimes\mathrm{id})(\rho) \ge 0$,  it follows that $(\gamma\otimes\mathrm{id})(\eta\otimes\mathrm{id})(\rho)\not\ge 0$. \\
As $(\eta\otimes\mathrm{id})(\rho)$ is a projection of $\rho\in\bounded{m}{n}$ to $\mathcal{B}(\mathbb{C}^{m-1}\otimes\mathbb{C}^{n})$,  this projection $(\eta\otimes\mathrm{id})(\rho)$ must have a Schmidt number of at least $k$ for $(\gamma\circ\eta \otimes\mathrm{id})(\rho)\not\ge 0$ to hold. 
Iterating the same argument with the map and state  $\gamma$, $(\eta\otimes\mathrm{id})(\rho)$ in place of $\phi$, $\rho$, induces the subsequent nested steps.
\end{proof}

 For a given unnormalized state $\rho_c$ in $m\times n$, which we call a \textit{core} state, we say that $\rho$ is its $(\delta_{\mathrm{A}},\delta_{\mathrm{B}})$-extension, if  $\rho_c \fsub \rho $ in $(m+\delta_{\mathrm{A}})\times (n+\delta_{\mathrm{B}})$, with $\delta_{\mathrm{A}},\delta_{\mathrm{B}}\in \mathbbm{N}$.
 In what follows, we study the $(1,0)$-extensions, if not further specified, i.e. subsystem $\mathrm{A}$ is extended by adding a state $\ket{\bot}$ to $\mathbbm{C}^m$.
This adds a new diagonal block supported on $\ket{\bot}\otimes\mathbbm{C}^n$, referred to as $\rho_e$, and new off-diagonal elements $\chi \in \mathcal{B}(\mathbbm{C}^n, \mathbbm{C}^m\otimes \mathbbm{C}^n)$. Hence, the full local extension may be written as unnormalized state,
\begin{equation}~\label{eq:block_form}
   \hspace{-0.15cm} \rho = \rho_c + \chi\bra{\bot}+ \ket{\bot}\chi^\dagger + \ket{\bot}\bra{\bot}\otimes\rho_e = \begin{pmatrix}
        \rho_c & \chi\\
        \chi^\dagger & \rho_e        
    \end{pmatrix}.
\end{equation}

Since we require that local extensions $\rho\geq 0$, we learn the relations between ranges, $R(\chi) \subseteq R(\rho_c)$, and $R(\chi^\dagger)\subseteq R(\rho_e)$. Furthermore, by the (generalized) Schur complement $\rho_{e\backslash c} \equiv \rho_e-\chi^\dagger\rho_c^{-1}\chi\ge 0$ and $\rho_{c\backslash e} \equiv \rho_c-\chi\rho_e^{-1}\chi^\dagger\ge 0$. See SM~\ref{app:ext_psd} and~\ref{app:ext_ppt} for a more detailed discussion of the convex geometry of local extensions.  
We are ready to formulate a strengthened version of Observation~\ref{obs:matrioshka}:

\begin{theorem}~\label{th:ext2}
    Let $\rho\in\mathcal{B}(\mathcal{H}_\mathrm{A}\otimes \mathcal{H}_\mathrm{B})$ and $\ket{\phi}\in \H_\mathrm{A}$.  Then, the Schmidt number $\mathrm{SN}{((A\otimes\mathbb{1}) \rho(A^\dagger\otimes\mathbb{1}))}$ drops at most by $1$, compared to $\mathrm{SN}(\rho)$, if $A=\mathbb{1}-\ketbra{\phi}{\phi}$.
\end{theorem}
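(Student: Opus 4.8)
The plan is to view $\rho$ as a local extension, in the sense of \eq{eq:block_form}, of the projected state $\sigma:=(A\otimes\mathbb{1})\rho(A^\dagger\otimes\mathbb{1})$, and then to convert any Schmidt-number decomposition of $\sigma$ into one of $\rho$ whose Schmidt ranks are larger by at most one. Throughout one may assume $\|\phi\|=1$: otherwise $A=\mathbb{1}-\ketbra{\phi}{\phi}$ is invertible, $A\otimes\mathbb{1}$ is an invertible SLOCC map, and $\mathrm{SN}(\sigma)=\mathrm{SN}(\rho)$, so there is nothing to show. Hence $A$ is the orthogonal projection onto $\mathcal{U}:=\ket{\phi}^{\perp}\subset\H_\mathrm{A}$, of codimension $1$, and writing $\H_\mathrm{A}=\mathcal{U}\oplus\CC\ket{\phi}$ puts $\rho$ into the block form of \eq{eq:block_form} with core $\rho_c=\sigma$, off-diagonal block $\chi$, and extension block $\rho_e=(\bra{\phi}\otimes\mathbb{1})\rho(\ket{\phi}\otimes\mathbb{1})$.

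First I would record the consequences of $\rho\ge0$ already noted below \eq{eq:block_form}: $R(\chi)\subseteq R(\sigma)$, and the generalized Schur complement $\rho_{e\backslash c}=\rho_e-\chi^\dagger\sigma^{-1}\chi\ge0$, where $\sigma^{-1}$ denotes the pseudo-inverse. Setting $K:=\sigma^{-1}\chi$, the range inclusion guarantees $\sigma K=\chi$ and $K^\dagger\sigma K=\chi^\dagger\sigma^{-1}\chi$. Then, with $W:=\begin{pmatrix}\mathbb{1}\\K^\dagger\end{pmatrix}$ regarded as a linear map from the core block $\mathcal{U}\otimes\H_\mathrm{B}$ into $\H_\mathrm{A}\otimes\H_\mathrm{B}$, a short block computation gives
\begin{equation*}
W\sigma W^\dagger=\begin{pmatrix}\sigma&\chi\\\chi^\dagger&\chi^\dagger\sigma^{-1}\chi\end{pmatrix},\qquad \rho=W\sigma W^\dagger+\ketbra{\phi}{\phi}\otimes\rho_{e\backslash c},
\end{equation*}
a sum of two positive semidefinite operators.

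The key step is then the observation that $W$ raises the Schmidt rank of a pure state by at most one: for $\ket{\psi}\in\mathcal{U}\otimes\H_\mathrm{B}$ we have $W\ket{\psi}=\ket{\psi}+\ket{\phi}\otimes(K^\dagger\ket{\psi})$ (the first term read inside $\mathcal{U}\otimes\H_\mathrm{B}\subset\H_\mathrm{A}\otimes\H_\mathrm{B}$), the first summand having the same Schmidt rank as $\ket{\psi}$ and the second being a product vector, so $\mathrm{SR}(W\ket{\psi})\le\mathrm{SR}(\ket{\psi})+1$. Applying $W$ termwise to an optimal decomposition $\sigma=\sum_i p_i\ketbra{\psi_i}{\psi_i}$ with $\mathrm{SR}(\psi_i)\le\mathrm{SN}(\sigma)$ yields $\mathrm{SN}(W\sigma W^\dagger)\le\mathrm{SN}(\sigma)+1$, while the remaining term $\ketbra{\phi}{\phi}\otimes\rho_{e\backslash c}$ is separable. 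Since the Schmidt number is subadditive under addition of (unnormalized) states, $\mathrm{SN}(\rho)\le\mathrm{SN}(\sigma)+1$, that is $\mathrm{SN}(\sigma)\ge\mathrm{SN}(\rho)-1$, which is the claim.

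I expect the only delicate point to be the bookkeeping with the pseudo-inverse: one must check that the positivity-induced range condition $R(\chi)\subseteq R(\sigma)$ is precisely what makes $K=\sigma^{-1}\chi$ satisfy $\sigma K=\chi$ and $K^\dagger\sigma K=\chi^\dagger\sigma^{-1}\chi$, so that the leftover block of $\rho-W\sigma W^\dagger$ is exactly the Schur complement $\rho_{e\backslash c}\ge0$. The block identity for $W\sigma W^\dagger$ and the estimate $\mathrm{SR}(W\psi)\le\mathrm{SR}(\psi)+1$ are then entirely elementary, and no appeal to the Choi decomposition theorem used for Observation~\ref{obs:matrioshka} is needed: this argument is constructive and applies to any codimension-one local projection.
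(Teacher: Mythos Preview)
Your proof is correct and is essentially the same argument as the paper's: your lift $W\ket{\psi}=\ket{\psi}+\ket{\phi}\otimes(K^\dagger\ket{\psi})$ coincides with the paper's ansatz $\ket{\tilde\psi_i}=\bigl(\begin{smallmatrix}\ket{\psi_i^c}\\ \chi^\dagger\rho_c^{-1}\ket{\psi_i^c}\end{smallmatrix}\bigr)$, and in both cases the leftover $\rho-W\sigma W^\dagger$ is the separable Schur-complement block $\ketbra{\phi}{\phi}\otimes\rho_{e\backslash c}$. The only cosmetic point is that ``subadditive'' is imprecise---what you use is that $\mathrm{SN}$ of a conic sum is bounded by the maximum of the summands' Schmidt numbers---but the conclusion and the mechanism are identical to the paper's.
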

The above statement can be reversed to refer to local extensions, yielding: A $(1,0)$-extension cannot increase the Schmidt number by more than one. This leads to an immediate observation that

\begin{observation}
    Any $3\times 4$ PPT state $\tilde\rho$ with a product vector $\ket{\alpha\beta}$ in its kernel has Schmidt number at most $2$.   
\end{observation}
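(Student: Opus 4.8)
The plan is to peel one dimension off subsystem $\mathrm{A}$ in a way that leaves $\ket{\alpha\beta}$ in the kernel, landing on a $2\times 4$ system, and then to use that in such a small system positivity of the partial transpose together with a product vector in the kernel already forces separability. Write $\ket{\alpha\beta}=\ket{\alpha}\otimes\ket{\beta}$ with $\ket{\alpha}\in\H_{\mathrm A}=\mathbb{C}^{3}$ and $\ket{\beta}\in\mathbb{C}^{4}$ normalised, choose a unit vector $\ket{\phi}\in\mathbb{C}^{3}$ with $\langle\phi|\alpha\rangle=0$, set $A=\mathbb{1}-\ketbra{\phi}{\phi}$, and put $\sigma:=(A\otimes\mathbb{1})\tilde\rho(A\otimes\mathbb{1})$. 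Then three facts hold: $\sigma$ is supported on a $2\times 4$ block, since $R(A)$ is two-dimensional; $\sigma$ is PPT, because the PPT set is closed under the SLOCC map $X\mapsto(A\otimes\mathbb{1})X(A\otimes\mathbb{1})$; and $\ket{\alpha\beta}\in\ker\sigma$, because $A\ket{\alpha}=\ket{\alpha}$ gives $\sigma\ket{\alpha\beta}=(A\otimes\mathbb{1})\tilde\rho\ket{\alpha\beta}=0$. (If $\sigma=0$ then $\tilde\rho$ lives on the $1\times 4$ product block $\ket{\phi}\otimes\mathbb{C}^{4}$ and is trivially separable, so we may assume $\sigma\neq 0$.)

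The key step is then to show that a PPT state on $\mathbb{C}^{2}\otimes\mathbb{C}^{4}$ whose kernel contains a product vector is separable. Bringing that vector to $\ket{00}$ by local unitaries (which preserve PPT and separability), the qubit-block decomposition $\sigma=\sum_{i,j\in\{0,1\}}\ketbra{i}{j}_{\mathrm A}\otimes\sigma_{ij}$ satisfies $\sigma_{00}\ket{0}=\sigma_{01}^{\dagger}\ket{0}=0$, and since $\ket{00}$ is real also $\sigma^{\Tb}\ket{00}=0$. This is exactly the configuration controlled by the structure theory of $2\times N$ PPT states of Kraus, Cirac, Karnas and Lewenstein: after maximal subtraction of product vectors from its range, a PPT $2\times N$ state that fails to be separable must reduce to an edge state of a rigid form whose kernel contains no product vector, so the presence of $\ket{00}\in\ker\sigma$ already forces $\sigma$ to be separable, i.e.\ $\mathrm{SN}(\sigma)=1$.

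It remains only to transfer this back: Theorem~\ref{th:ext2}, applied to $\tilde\rho$ with the same $A=\mathbb{1}-\ketbra{\phi}{\phi}$, guarantees that the Schmidt number drops by at most one under this projection, so $\mathrm{SN}(\tilde\rho)\le\mathrm{SN}(\sigma)+1=2$. The main obstacle is the middle step, and the product vector is genuinely indispensable there: the reduction only reaches $2\times 4$, where---unlike the $2\times 2$ and $2\times 3$ cases---positivity of the partial transpose alone does not imply separability (the Horodecki bound entangled state is a counterexample), so one must really use the kernel condition (equivalently, that $\sigma^{\Tb}$ also carries a product vector in its kernel). The tempting shortcut of peeling a further dimension off the ququart and invoking $2\times 3$ separability does not work, since each projection can cost one unit of Schmidt number and would yield only $\mathrm{SN}(\tilde\rho)\le 3$, which is vacuous in a $3\times 4$ system.
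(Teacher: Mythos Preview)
Your argument is correct and is essentially the paper's own proof: project onto a $2\times 4$ block on the $\mathrm{A}$-side that still contains $\ket{\alpha}$, note that the resulting PPT state keeps $\ket{\alpha\beta}$ in its kernel and is therefore separable by the Kraus--Cirac--Karnas--Lewenstein result, and then apply Theorem~\ref{th:ext2} to conclude $\mathrm{SN}(\tilde\rho)\le 2$. Your extra remarks---the degenerate case $\sigma=0$, and why peeling a dimension off the $\mathrm{B}$-side instead would only yield the vacuous bound $\mathrm{SN}(\tilde\rho)\le 3$---are correct elaborations beyond the paper's terse sketch.
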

To see this, project $\tilde\rho$ on a $2\times 4$ block which contains the product state $\ket{\alpha\beta}$. Then the resulting state is PPT with the product vector in its kernel. Such states are known to be separable by Ref.~\cite{kraus2000separability}. Theorem~\ref{th:ext2} then implies that $\tilde\rho$ has Schmidt number at most $2$.
We now prove Theorem~\ref{th:ext2} (See Supplemental material (SM)~\ref{app:SCpeel} for the equivalent statement in terms of $k$-positive maps).

\begin{proof}
We use the notation in Eq.~(\ref{eq:block_form}). We show that for any conic decomposition of $\rho_c$ into vectors  $\{\ket{\psi^{c}_i}\}_{i<N}$ we can induce a conic decomposition $\rho = \sum_i \ket{\tilde{\psi}_i}\bra{\tilde{\psi}_i}+\rho^{\mathrm{sep}}$, with $\mathrm{SR}(\ket{\tilde{\psi}_i})\le \mathrm{SR}(\ket{{\psi}^{c}_i})+1$ and a separable $\rho^{\mathrm{sep}}$,
with the following \textit{ansatz}:
\begin{equation}
    \ket{\tilde{\psi}_i} = \rho \begin{pmatrix}
        \rho_c^{-1}\ket{\psi^{c}_i}\\ 0\end{pmatrix} =
        \begin{pmatrix}
        \,\ket{\psi^{c}_i}\\  
        \chi^\dagger\rho_c^{-1}\ket{\psi^{c}_i}
    \end{pmatrix} \in R(\rho), 
\end{equation}
\begin{align}
    \tilde{\rho} \equiv \sum_i \ket{\tilde{\psi}_i}\bra{\tilde{\psi}_i} = 
    \begin{pmatrix}
        \rho_c & \chi \\
        \chi^\dagger & \chi^\dagger \rho_c^{-1}\chi
    \end{pmatrix} ,
\end{align}
where $\rho_c^{-1}$ is the pseudoinverse and we used identity $\rho_c\rho_c^{-1}\ket{\psi^{c}_i} = \ket{\psi^{c}_i}$ and the identity $\rho_c\rho_c^{-1}\chi=\chi$. 
Using the Schur complement, we see that $\rho-\tilde{\rho} =  \rho_{e\backslash c}\otimes\ketbra{\bot}{\bot}\ge 0$, is 
 trivially separable and, thus, can be added to $\tilde{\rho}$ without increasing its Schmidt number.  Also, 
$\mathrm{SR}(\ket{\tilde{\psi}_i} )\le  1+\mathrm{SR}(\ket{\psi^{c}_i})$
, as $\mathrm{SR}(\ket{\tilde{\psi}_i}-\ket{{\psi}^{c}_i} )\le 1$. Now suppose that $\mathrm{SR}(\ket{\psi^{c}_i})\le k$, and that this is the minimum $k$ for which a decomposition exists. Then, Theorem~\ref{th:ext2} follows automatically. \end{proof}

\textit{Results on PPT local extensions} --
Every PPT state permits PPT local extensions, but we are interested in those that may increase the Schmidt number. Local extensions obtained by SLOCC operations, given by $S_\mathrm{A}\in\mathcal{B}(\mathcal{H}_\mathrm{A},\mathcal{H}_\mathrm{A}\oplus\ket{\bot})$, where $\ket{\phi}\in \H_{\mathrm{A}}$, 
\begin{equation}\label{eq:SLOCC_ext}
    S_\mathrm{A}=\mathbb{1}(\H_\mathrm{A})+ \ket{\bot}\bra{\phi},
\end{equation}
define a set of trivial local extensions regarding the Schmidt number.
We call the PPT states that only allow SLOCC local extensions \textit{unextendible}. 

Here we develop a more general local extension technique for PPT states. 
If subsystem $\mathcal{H}_\mathrm{A}$ is locally extended (provided we define $\ket{\bot} = \ket{\bot}^*$, without loss of generality), the partial transposition takes the form 
\begin{equation}
 \rho^{\Ta} = 
\begin{pmatrix}
        \rho_c ^{\Ta}& (\chi^{\dagger})^{\Ta}\\
        \chi^{\Ta} & \rho_e        
\end{pmatrix}.
\end{equation}
Then, $\left(\rho^{\Ta}\right)_{e\backslash c}\equiv \rho_e -\chi^{\Ta}(\rho_c ^{\Ta})^{-1} (\chi^{\dagger})^{\Ta}\geq0$.
The main idea is to identify $\chi$, which leads to a valid PPT local extension by setting up a system of linear equations. First, observe that by Schur complement for a fixed $\rho_c$, to ensure PPT, the identities $P\chi=\chi$ and $Q(\chi^\dagger)^{\Ta}=(\chi^\dagger)^{\Ta}$ need to hold, where $P = \mathbb{1}(R(\rho))$, $Q = \mathbb{1}(R(\rho^{\Ta}))$ are projectors on the ranges, respectively. 
We rewrite these constraints in $\chi$ in the tripartite Choi dual form, 
\begin{equation}
\label{eq:linext}
    P_{\mathrm{AB}}\otimes\mathbb{1}_{\bar{\mathrm{B}}}\ket{\tilde\chi}_{\mathrm{A}\mathrm{B}\bar{\mathrm{B}}}=    Q_{\mathrm{A}\bar{\mathrm{B}}}\otimes\mathbb{1}_\mathrm{B}\ket{\tilde\chi}_{\mathrm{A}\mathrm{B}\bar{\mathrm{B}}}=
    \ket{\tilde\chi}_{\mathrm{A}\mathrm{B}\bar{\mathrm{B}}},
\end{equation}
where
$\ket{\tilde\chi}_{\mathrm{A}\mathrm{B}\bar{\mathrm{B}}} = \chi\ket{\Gamma}_{\mathrm{B}\bar{\mathrm{B}}}$, $\ket{\Gamma} = \sum_i \ket{ii} \in \H_{\mathrm{B}}\otimes\H_{\bar{\mathrm{B}}}$, and 
the partial transposition now corresponds to the SWAP operations between subsystems B and $\bar{\mathrm{B}}$.

These linear equations allow computing a lower bound on the dimension of the solution space. From a full space of dimension $mn^2$ we subtract the dimension of the kernels of $(P_{\mathrm{AB}}\otimes \mathbb{1}_{\bar{\mathrm{B}}})$ and $(Q_{\mathrm{A}\bar{\mathrm{B}}}\otimes \mathbb{1}_\mathrm{B})$, which are $(mn-p)n$, and $(mn-q)n$, where $p,q$ are the ranks of $\rho_c$ and $\rho_c^{T_A}$, respectively, forming a \textit{birank} $(p,q)$.  The number of nontrivial extensions is lower-bounded by
\begin{equation}
    \#(\mathrm{local extensions}) \ge (p+q-mn)n \label{eq:no_extensions}-m.
\end{equation}

For generic states where the right-hand side of the inequality is negative, numerically, we only find unextendible states. To check this, we generate random states with a fixed birank using the method from Ref.~\cite{leinaas2007extreme}. Specifically, to create a PPT state of birank $(p,q)$, we take the $(nm - p)$ smallest eigenvalues of a random Hermitian matrix $X$, with normally distributed entries, and the $(nm - q)$ smallest eigenvalues of its partial transpose $X^{T_B}$, combine them into a single vector, and then drive this vector to zero using the Gauss-Newton method. 
Sampling $100$ states for system sizes $2\times 4$, $3\times 3$, $3\times 4$, and $4\times 4$ and all biranks shows that the inequality is empirically tight for small systems, when the obtained state is an extreme point of the PPT set. Next, we use a parameter-counting argument to derive analytical results for the allowed PPT entanglement structures in the low-dimensional cases. 
We show that

\begin{theorem}
    Every $3\times 3$ PPT entangled state with rank $4$ is unextendible.
\end{theorem}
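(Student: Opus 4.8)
The plan is to fix a rank-$4$ PPT entangled core $\rho_c$ on $\mathbb{C}^3\otimes\mathbb{C}^3$ and analyze all of its PPT $(1,0)$-extensions $\rho$ on $\mathbb{C}^4\otimes\mathbb{C}^3$ via the block form \eq{eq:block_form}. For fixed $\rho_c$, the positivity and partial-transpose conditions (the range inclusions and Schur-complement relations stated around \eq{eq:block_form} and \eq{eq:linext}) leave, apart from enlarging $\rho_e$ by a trivially separable block $\ketbra{\bot}{\bot}\otimes\sigma$, only the off-diagonal block $\chi$ free, and $\chi$ is forced to lie in the solution space of the linear system \eq{eq:linext}. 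The SLOCC extensions \eq{eq:SLOCC_ext} realize precisely the three-parameter family $\chi_\phi=\rho_c(\ket{\phi}\otimes\mathbb{1})$, $\ket{\phi}\in\mathbb{C}^3$, which is genuinely $3$-dimensional: $\rho_c(\ket{\phi}\otimes\mathbb{1})=0$ would put $\ket{\phi}\otimes\mathbb{C}^3$ in $\ker\rho_c$, confining $\rho_c$ to a $2\times3$ block and hence making it separable. The theorem is therefore equivalent to the claim that the solution space of \eq{eq:linext} has dimension exactly $3$ for every such $\rho_c$.

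The second ingredient is the structure of rank-$4$ PPT entangled two-qutrit states: they have birank $(4,4)$ and are edge states. Indeed, PPT states of rank at most $3$ in $3\times3$ are separable, so $\mathrm{rank}(\rho_c^{\Ta})\ge4$, and the known classification (consistent with numerical studies of low-rank PPT states) pins it to $\mathrm{rank}(\rho_c^{\Ta})=4$; and $\rho_c$ must be an edge state, since otherwise a product vector could be subtracted while staying PPT, leaving a rank-$\le3$ separable PPT state and contradicting entanglement of $\rho_c$. In particular $R(\rho_c)$ contains only finitely many product vectors, none of them compatible, in the sense of the range separability criterion, with a product vector in $R(\rho_c^{\Ta})$. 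It should be stressed that this small birank is also why the general bound does not settle the statement: with $p=q=4$ and $m=n=3$ the right-hand side of \eq{eq:no_extensions} equals $-6$ and is vacuous, whereas a birank with $q\ge7$ would by \eq{eq:no_extensions} \emph{force} nontrivial extensions and make the claim false.

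The heart of the proof is the matching upper bound. Choosing bases $\{\ket{r_i}\}_{i=1}^{4}$ of $R(\rho_c)$ and $\{\ket{s_j}\}_{j=1}^{4}$ of $R(\rho_c^{\Ta})$ and writing $\chi=\sum_i\ket{r_i}\bra{c_i}$, the constraints \eq{eq:linext} become a homogeneous linear system of $mn^2=27$ equations in the $24$ unknowns $\{c_i\}\cup\{d_j\}$, whose solution space has dimension $\ge3$ by the SLOCC family, and one must show it is exactly $3$ for every rank-$4$ PPT entangled $\rho_c$. For the reverse bound one route is to show that a solution outside the SLOCC span, transported through the Choi duality underlying \eq{eq:linext}, manufactures a product vector in $R(\rho_c)$ together with a matching one in $R(\rho_c^{\Ta})$, contradicting the edge property; a more robust route is to verify the dimension is $3$ directly on the finite list of local normal forms furnished by the classification. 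I expect this exactness step — upgrading the one-sided count \eq{eq:no_extensions} to an equality, precisely because $\rho_c$ is an edge state of minimal rank — to be the main obstacle, since it has to exclude an accidental excess intersection uniformly over the whole family rather than only for generic members.
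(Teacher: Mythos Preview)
Your setup is correct: the question reduces to showing the solution space of \eq{eq:linext} has dimension exactly $3$, and the SLOCC family supplies the lower bound. But you have not closed the gap, and neither of your two proposed routes does so. The paper's proof rests on a structural fact you do not invoke: by the Chen--{\DJ}okovi{\'c} classification, a rank-$4$ PPT entangled two-qutrit state has exactly six product vectors $\ket{\alpha_i\beta_i}$ in its \emph{kernel}, five of which are linearly independent. Each such $\ket{\alpha_i\beta_i}\in\ker\rho_c$ (and hence $\ket{\alpha_i\beta_i^{\,*}}\in\ker\rho_c^{\Ta}$) yields two triples of linear constraints on $\ket{\tilde\chi}$: $\bra{\alpha_i\beta_i}_{\mathrm{AB}}\otimes\mathbb{1}_{\bar{\mathrm{B}}}\ket{\tilde\chi}=0$ and $\bra{\alpha_i}_{\mathrm{A}}\otimes\mathbb{1}_{\mathrm{B}}\otimes\bra{\beta_i^{\,*}}_{\bar{\mathrm{B}}}\ket{\tilde\chi}=0$. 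Five product vectors give $30$ scalar equations, with exactly six redundancies (the overlap $\braket{\alpha_i\beta_i\beta_i^{\,*}|\tilde\chi}$ sits in both families, and the sixth, linearly dependent, product vector accounts for one more coincidence), so one gets $24$ independent constraints on the $27$-dimensional space and the solution space is forced to $3$ dimensions. This is not a generic-rank argument but a hard count coming from the UPB-like kernel structure specific to these states.

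Your first alternative route --- that a non-SLOCC $\chi$ would manufacture a compatible pair of product vectors in $R(\rho_c)$ and $R(\rho_c^{\Ta})$ and contradict the edge property --- is not substantiated: membership of $\chi$ in $R(\rho_c)\otimes(\mathbb{C}^3)^{*}$ and of $(\chi^\dagger)^{\Ta}$ in $R(\rho_c^{\Ta})\otimes(\mathbb{C}^3)^{*}$ does not by itself produce \emph{product} vectors in those ranges, let alone matching ones. Your second route, checking normal forms case by case, is in principle viable but is exactly what the kernel-product-vector argument obviates. Also, your ``$24$ unknowns $\{c_i\}\cup\{d_j\}$'' is garbled: the unknown is the single $27$-dimensional $\ket{\tilde\chi}$, and the two range conditions each impose $(9-4)\cdot3=15$ equations; the whole point is to show that, despite the naive overcount, precisely $24$ of them are independent --- and that is where the six product vectors in the kernel enter.
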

\begin{proof}
It is known that every $3\times 3$ PPT entangled state $\rho$ of rank $4$  has five linearly independent and exactly one linearly dependent product states $\ket{\alpha_i \beta_i} \in \mathbbm{C}^3\otimes \mathbbm{C}^3$,  $0< i \leq 6$ in its kernel~\cite{chen2011description}. Using Eq.~\eqref{eq:linext}, for $0<i<6$, any local extension needs to satisfy the set of linear equations, 
$\bra{\alpha_i\beta_i}_{{\mathrm{AB}}}\otimes \mathbb{1}_{\bar{\mathrm{B}}}\ket{\chi}_{\mathrm{AB}\bar{\mathrm{B}}}=0 $ and $\bra{\alpha_i}_{{\mathrm{A}}}\otimes \mathbb{1}_{{\mathrm{B}}}\otimes (\bra{{\beta}_i})^*_{\bar{\mathrm{B}}}\ket{\chi}_{\mathrm{AB}\bar{\mathrm{B}}}=0$.
This gives us in total $30$ linear equations in $\ket{\chi}_{\mathrm{AB}\bar{\mathrm{B}}}$, but exactly $6$ of them are redundant, as $\braket{\alpha_i\beta_i{(\beta_i)}^*|\chi}=0,~(0<i\le 6)$.
The $d_\mathrm{A}(d_\mathrm{B})^2=27$ dimensional solution space reduces to $3$ dimensions after imposing these $24$ linearly independent constraints. 
The SLOCC extensions fully realize a $3$-dimensional solution space, and thus only trivial extensions may exist. 
\end{proof}

Similar results can be derived for the cases where the exact characterization of the product state in the kernel is known. For example, all $4\times 4$ PPT states with rank~$6$~\cite{leinaas2010numerical} are also unextendible. 
Next, we identify a sufficient condition for the existence of nontrivial PPT local extensions for $\rho_c$ that is independent of the birank inequalities. 
Contrary to the previous unextendible examples, where the kernel is spanned by product states and the range is a completely entangled subspace, we see that states with product vectors in the range are easy to extend nontrivially.

\textit{Sufficient conditions for nontrivial PPT extensions --}
Suppose there exist a product vector  in the range of the core state $\ket{\alpha\beta}\in R(\rho_c)$ and one in the corange $\ket{\alpha\gamma}\in R(\rho^{T_\mathrm{A}}_c)$ with $|\mean{\beta|\gamma}| \neq 1$ and $\mathrm{rk}(\bra{\alpha} \rho_c \ket{\alpha})>2$. Then, there exists a PPT local extension with $\chi=\ketbra{\alpha\beta}{\gamma}$, which satisfies Eq.~\eqref{eq:linext}. 
Curiously, in this construction, the associated state $\ket{\Tilde{\chi}}_{\mathrm{AB}\bar{B}}$ is a product state, and thus contains no entanglement. However, this simple form of an extension can be used to increase the Schmidt number.
Finally, we choose $\rho_e$ with a minimal rank,  $\rho_e= \bra{\alpha\beta}\rho_c^{-1}\ket{\alpha\beta}\ketbra{\gamma}{\gamma}+\bra{\alpha\gamma}(\rho^{\Ta})^{-1}\ket{\alpha\gamma}\ketbra{\beta}{\beta}$.
This local extension is always nontrivial. In what follows, we show that it can increase the Schmidt number of a PPT state, and with it, we discover the highest Schmidt number PPT states known.  

\begin{figure}
    \includegraphics[width=0.99\linewidth]{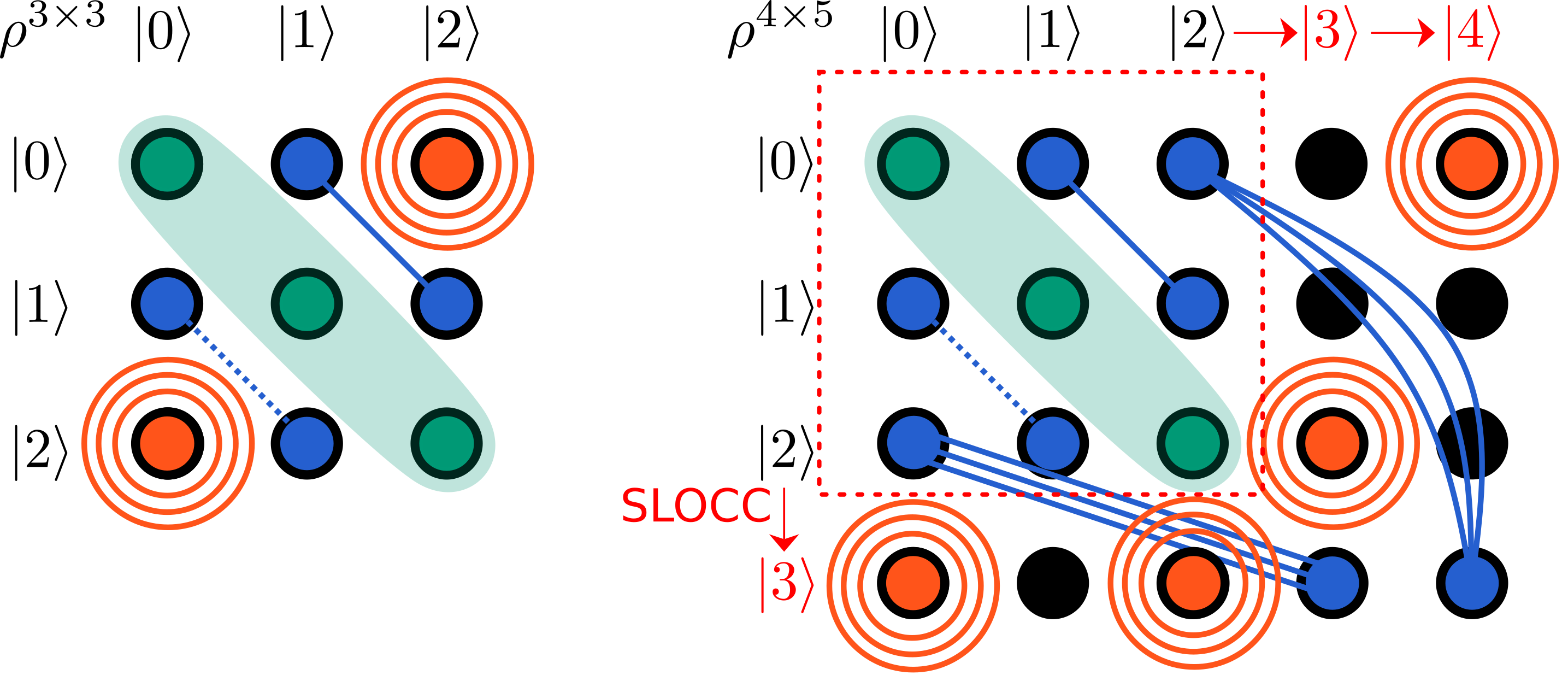}
    \caption{Generalized grid states: (left) Schmidt number $2$ PPT bound entangled state in $3\times 3$ local dimensions, (right) Schmidt number $3$ PPT bound entangled state in $4\times 5$ obtained after local extensions from the $\rho^{3\times 3}$. First SLOCC extension is done by subsystem $A$, then a sequence of two nontrivial extensions in performed by subsystem $B$. The red dashed rectangle indicates the original $3\times 3$ block. } \label{fig:45be}
\end{figure}

\begin{theorem}~\label{th:4x5ex}
    In local dimensions $4\times 5$ a Schmidt number $3$ PPT state exists.
\end{theorem}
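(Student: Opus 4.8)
The plan is to realise the sequence sketched in \fig{fig:45be} as an explicit matrix and then pin its Schmidt number between $3$ and $3$.

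\emph{Construction.} I start from an explicit generalized grid state $\rho^{3\times3}\in\bounded{3}{3}$ --- the rank-deficient PPT state in the left panel of \fig{fig:45be}, built from a grid pattern of product vectors --- which is bound entangled with $\mathrm{SN}(\rho^{3\times3})=2$ (both facts are checked directly on the small matrix, or inherited from the chessboard/tile families). Applying the SLOCC $(1,0)$-extension $S_\mathrm{A}=\mathbb{1}+\ket{\bot}\bra{\phi}$ of \eq{eq:SLOCC_ext} lands on a PPT state in $4\times3$ with $\mathrm{SN}$ still $2$, since the PPT set is SLOCC-closed and $S_\mathrm{A}$ is injective. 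Then, twice in a row, I apply the subsystem-$\mathrm{B}$ analogue of the sufficient condition for nontrivial PPT extensions proved above: at each step I locate a product vector $\ket{\alpha\beta}\in R(\rho_c)$ and a product vector $\ket{\alpha\gamma}\in R(\rho_c^{\Ta})$ with $|\langle\beta|\gamma\rangle|\neq1$ and $\mathrm{rk}(\bra{\alpha}\rho_c\ket{\alpha})>2$, set $\chi=\ketbra{\alpha\beta}{\gamma}$ with $\rho_e$ of minimal rank, and obtain a PPT local extension. After the two B-extensions this produces an explicit PPT state $\rho^{4\times5}\in\bounded{4}{5}$; as a check one verifies $\rho^{4\times5}\ge0$ and $(\rho^{4\times5})^{\Ta}\ge0$ by diagonalising the (sparse) matrices.

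\emph{Upper bound $\mathrm{SN}\le3$.} It suffices to show that the $4\times4$ state $\rho^{4\times4}$ reached after the first nontrivial B-extension still has $\mathrm{SN}=2$: the reverse form of Theorem~\ref{th:ext2} --- a single $(0,1)$-extension raises $\mathrm{SN}$ by at most one --- then gives $\mathrm{SN}(\rho^{4\times5})\le3$. For $\mathrm{SN}(\rho^{4\times4})\le2$ I would exhibit an explicit conic decomposition into Schmidt-rank-$\le2$ vectors plus a separable remainder, propagating the ansatz $\ket{\tilde\psi_i}=\rho\,(\rho_c^{-1}\ket{\psi^{c}_i}\oplus0)$ of the proof of Theorem~\ref{th:ext2} through the extension and checking that the extra Schmidt component it creates falls into an already-occupied Schmidt sector --- possible precisely because $\chi=\ketbra{\alpha\beta}{\gamma}$ has rank one and the dual $\ket{\tilde\chi}_{\mathrm{AB}\bar{\mathrm{B}}}$ is a product vector. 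Failing that, one writes down a Schmidt-rank-$\le3$ decomposition of $\rho^{4\times5}$ directly.

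\emph{Lower bound $\mathrm{SN}\ge3$ --- the main obstacle.} Here lies the real work: certifying that $\rho^{4\times5}$ is \emph{not} a mixture of Schmidt-rank-$2$ pure states, which for a PPT state cannot follow from the partial transpose or from realignment-type criteria. The plan is to construct a Schmidt-number witness, i.e.\ a Hermitian $W$ with $\mathrm{Tr}(W\sigma)\ge0$ for every $\sigma$ with $\mathrm{SN}(\sigma)\le2$ and $\mathrm{Tr}(W\rho^{4\times5})<0$; by the map--state duality this is equivalent to a $2$-positive, non-completely-positive map $\Lambda:\mathcal{B}(\mathbb{C}^{4})\to\mathcal{B}(\mathbb{C}^{5})$ with $(\Lambda\otimes\mathrm{id})(\rho^{4\times5})\not\ge0$. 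I would look for $\Lambda$ among perturbations of standard $2$-positive maps (reduction-type, Breuer--Hall, generalized Choi maps), first fixing parameters numerically via a semidefinite relaxation of $2$-positivity and then certifying $2$-positivity rigorously by putting $\Lambda$ into the form $\psi+\gamma\circ\eta$ of the Choi decomposition theorem used in Observation~\ref{obs:matrioshka} (with $\psi$ completely positive, $\gamma$ a $1$-positive map on $\mathcal{B}(\mathbb{C}^{3})$ and $\eta$ an isometric embedding), which makes $2$-positivity manifest while leaving enough freedom in $\psi$ and $\gamma$ to keep $(\Lambda\otimes\mathrm{id})(\rho^{4\times5})\not\ge0$. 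The delicate point throughout is this rigorous check of $2$-positivity; what makes a closed-form $\Lambda$ plausible is the rigid grid structure of $\rho^{4\times5}$ and the controlled product vectors in its range and corange. Since projections only lower $\mathrm{SN}$, detecting $\mathrm{SN}=3$ on any local block would suffice --- but consistent with $4\times5$ being the smallest known such system, no $4\times4$ block should work, so the witness must genuinely use the full $4\times5$ state.
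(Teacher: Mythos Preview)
Your construction matches the paper's exactly. The two bounds, however, are handled quite differently.

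\emph{Upper bound.} Your plan to show $\mathrm{SN}(\rho^{4\times4})\le2$ by tracking the ansatz of Theorem~\ref{th:ext2} and arguing that the added Schmidt component ``falls into an already-occupied Schmidt sector'' is not justified: the ansatz gives $\mathrm{SR}(\ket{\tilde\psi_i})\le\mathrm{SR}(\ket{\psi^c_i})+1$ generically, and the rank-one form of $\chi$ alone does not force the extra term into an existing sector. The paper instead applies Theorem~\ref{th:ext2} in the \emph{projection} direction: it exhibits the local projection $\mathbb{1}_\mathrm{B}-\ketbra{0}{0}_\mathrm{B}$ under which $\rho^{4\times4}$ becomes a direct sum of a $2\times3$ PPT block and product states, hence separable; Theorem~\ref{th:ext2} then gives $\mathrm{SN}(\rho^{4\times4})\le2$ immediately. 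This is both rigorous and effortless.

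\emph{Lower bound.} Here your approach diverges sharply. You propose to search for a $2$-positive witness map $\Lambda$, certify its $2$-positivity via the Choi decomposition, and detect $\rho^{4\times5}$. In principle this is sound, but in practice constructing and rigorously certifying indecomposable $2$-positive maps is exactly the hard open problem that motivates the paper; none of the standard families (reduction, Breuer--Hall, generalized Choi) are known to detect this state, and the Choi decomposition $\psi+\gamma\circ\eta$ does not give a \emph{constructive} route to $2$-positivity --- it only guarantees such a decomposition exists. The paper bypasses witnesses entirely and uses the \emph{range criterion}: parametrize $R(\rho^{4\times5})$ by a $4\times5$ coordinate matrix $\mathbf\Psi$, impose $\mathrm{SR}\le2$ as the vanishing of all $3\times3$ minors, and show via Hilbert's Nullstellensatz that the resulting polynomial ideal contains $\psi_{00}^4$ (an explicit five-term identity does it). Hence every Schmidt-rank-$2$ vector in the range is orthogonal to $\ket{00}+\ket{11}+\ket{22}\in R(\rho^{4\times5})$, which forces $\mathrm{SN}>2$. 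The grid structure you invoke as making a witness ``plausible'' is precisely what makes the range criterion \emph{tractable}: the coordinate matrix has many repeated entries and zeros, so the minor ideal is small enough to analyse by hand.
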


We obtain this state by $(1,2)$-local extension of a PPT $\rho^{3\times 3}$  in $3\times 3$, which has Schmidt number $2$ (see SM~\ref{app:rho33} for a proof), to local dimension $4\times 5$, using one SLOCC extension and two consecutive nontrivial extensions outlined above (See Fig.~\ref{fig:45be} for the schematic representation).  The state $\rho^{3\times 3}$ admits a similar representation as used in Refs.~\cite{krebs2024high,ghimire2023quantum,lockhart2018entanglement}, and is known under the name of \textit{generalized grid} states. We briefly define them here, as the graphical visualization has proven fruitful to exactly calculate their Schmidt number, a task which is notoriously hard for general states.

\textit{High Schmidt number generalized grid states --} 
To depict an unnormalized grid state in $m\times n$, create a $m\times n$ grid $G$.  Identify sites $(i,j)$ of $G$ with a computational basis state $\ket{ij}$. These sites are connected to each other by hyperedges, which are either solid or dashed. The solid ones, $e^+\subseteq G$ corresponds to the superposition $\ket{e^+}:=\sum_{(i,j)\in e^+}\ket{ij}$. The dashed ones always connect two sites and  correspond to the superposition $\ket{e^-}:=\ket{ij}-\ket{kl}$, where $e^-\subseteq G$. 
A graph $(V,E^+,E^-)$, with a set of vertices $V\subseteq G$, the multiset of solid hyperedges $E^+$ and the multiset of dashed edges $E^-$ translates to an unnormalized mixed state, 
\begin{equation}
    \rho = \sum_{e^+\in E^+} \ketbra{e^+}{e^+}+\sum_{e^-\in E^-} \ketbra{e^-}{e^-}.
\end{equation}
One can normalize the state, but this does not change its entanglement properties. 
Generalized grid states give a succinct graphical description of a mixed state, which helps avoid writing out the entire state and  depicts local extensions intuitively. 
We give a $(1,2)$-local extension of $\rho^{3\times 3}$ (see Fig.~\ref{fig:45be} (left)) to local dimension $\rho^{4\times 5}$ (see Fig.~\ref{fig:45be} (right) in three steps:
First, two product states $\ket{30}$ and $\ket{32}$, each with a multiplicity of $3$, are admixed, trivially extending subsystem $A$ via SLOCC transformations.
Then, two nontrivial extensions are performed on subsystem B. The first with $\chi = \ketbra{20}{3}$ does not extend a Schmidt number by Theorem~\ref{th:ext2}, since the local projection $(\mathbb{1}_\mathrm{B}-\ketbra{0}{0}_\mathrm{B})$ results in a separable state (See SM~\ref{app:SingleExtInsuff} for the proof). Finally, we apply the second nontrivial local extension $\chi = \ketbra{02}{3}$, resulting in the state  $\rho^{4\times 5}$. We next prove $\mathrm{SN}(\rho^{4\times 5})=3$, which is the lowest dimensional Schmidt number known in literature for given dimensions~\cite{pal2019class,krebs2024high}. 

\begin{proof}
We analyze $\rho^{4\times 5}$ using the range criterion~\cite{sanpera2001schmidt}. 
We prove that every Schmidt rank $2$ state in $R(\rho^{4\times 5})$ is orthogonal to $\ket{e^*}:=\ket{00}+\ket{11}+\ket{22}$. This is sufficient to conclude $\mathrm{SN}(\rho)>2$, as $\ket{e^*}\in R(\rho^{4\times 5})$.
We parametrize  a state in the range $\ket{\psi}\in R(\rho)$ as a coordinate matrix  $\mathbf{\Psi}_{ij}= \mean{ij|\psi}$, 
\begin{equation}
  \mathbf{\Psi}: = 
    \begin{pmatrix}
        \psi_{00} & \psi_{01} & \psi_{02} &     0     & \psi_{04}\\
        \psi_{10} & \psi_{00} & \psi_{01} &     0     &     0\\
        \psi_{20} &-\psi_{10} & \psi_{00} & \psi_{23} &     0\\
        \psi_{30} &     0     & \psi_{32} & \psi_{20} & \psi_{02}
    \end{pmatrix}.
\end{equation}
Naturally, the rank $\mathrm{rk}(\mathbf{\Psi})$ equals $\mathrm{SR}(\ket{\psi})$, i.e. for $\mathrm{SR}(\ket{\psi})\le 2$, all sets containing $3$ columns of $\mathbf{\Psi}$ must be linearly dependent. 
This is the case if all $3\times 3$ minors of  $\mathbf{\Psi}$ vanish, translating into the sets of constraints, a subset of which we present here, $g_i$:
\begin{align}
\begin{split}
    g:=\{&(\psi_{20}(\psi_{00}^2-\psi_{01}\psi_{10}),\psi_{02}(\psi_{00}^2+\psi_{01}\psi_{10}),\\
    &\psi_{20}(\psi_{01}^2-\psi_{00}\psi_{02}),-\psi_{02}(\psi_{10}^2+\psi_{00}\psi_{20}),\\
    &\psi_{00}^3+\psi_{01}^2\psi_{20}-\psi_{10}^2\psi_{02}-\psi_{00}\psi_{02}\psi_{20})\}=0.
\end{split}
\end{align}
By Hilbert’s Nullstellensatz, to prove that every solution of the system satisfies 
$\psi_{00} = 0$, it suffices to show that some power of $\psi_{00}$ lies in the ideal 
generated by the $g_i$. In other words, we need an identity of the form $ \sum_i p_i g_i = \psi_{00}^k$ for suitable polynomials $p_i$ and $k\in \mathbb{N}$. In fact, this can already be achieved with very simple  
choices of $p_i$: a direct calculation shows that $\psi_{00}(g_5 - g_3 - g_4) - \psi_{02} g_1 + \psi_{20} g_2 \;=\; \psi_{00}^4.$ Thus, $\psi_{00}^4$ belongs to the ideal generated by $\{g_1,\dots,g_5\}$, and the 
Nullstellensatz guarantees that every common zero of the $g_i$ must satisfy 
$\psi_{00} = 0$, concluding the proof.
\end{proof}

\balance{
To conclude, we discuss the scaling of the Schmidt number with increasing local dimensions. 
In earlier work, an infinite family of Schmidt number $k$ PPT states was constructed in 
nonhomogeneous $(2k-1) \times \tfrac{k(k+1)}{2}$ systems~\cite{krebs2024high}. 
Here, we improve upon those results by proposing a family of generalized grid states in 
homogeneous $(2k-1) \times (2k-1)$ systems with the same Schmidt number, thereby reducing 
the second dimension to linear scaling. Explicit examples with Schmidt number 
$k \in \{2,3,4,5\}$ are provided in SM~\ref{app:scaling}. In each case, the construction 
embeds lower Schmidt number states as local blocks, so that the state with Schmidt number $k+1$ arises 
as a $(2,2)$-extension of the one with Schmidt number $k$. 
We conjecture that this recursive structure extends to larger dimensions. Rather than 
providing a complete proof, we highlight that Theorem~\ref{th:4x5ex} suggests the existence 
of even more favorable candidates once local extensions are generalized beyond the simple 
forms considered here. This opens a promising direction for future work, pointing toward 
increasingly efficient families of high-dimensional PPT states.

\textit{Discussions --} The characterization of entanglement in mixed states remains one of the most difficult 
open problems in quantum information theory, with few general tools available beyond low 
dimensions.
In this work we have introduced the method of local extensions, the natural 
dual to peel-off projections, as a constructive principle for building families of PPT 
states with controlled growth of the Schmidt number. This framework not only yields 
general bounds on how entanglement dimensionality can change under projections and 
extensions, but also enables explicit constructions that push the known limits of 
high-dimensional bound entanglement.  

Our results advance the broader effort to chart the landscape of high-dimensional 
mixed-state entanglement, an area where progress has been very limited. 
By providing a systematic way to construct and certify high-Schmidt-number PPT states, 
local extensions enrich the sparse set of techniques available for tackling the structure 
of bound entanglement. While we do not attempt a general proof of scaling behavior, our 
analysis indicates that even more efficient families are likely to emerge once extensions 
are generalized beyond the forms introduced here. Such a proof would likely be of a more 
technical nature, and we leave this line of work to future studies that can further 
develop and refine the framework. Beyond our immediate impact, the local extension 
framework may also inform studies of optimal distillation protocols and the 
incompressibility of quantum resources, thereby deepening the understanding of 
entanglement as a quantum information resource.

\textbf{Acknowledgments} -- We thank Otfried Gühne, Nikolai Miklin, Lucas Vieria, Albert Rico, Sophia Denker and Zaw Lin Htoo for discussions.
}

\onecolumngrid

\appendix

\begin{center}
\vfill \textbf{Overview of the Supplemental Material}
\end{center}
The first SM~\ref{app:SCpeel} contains Theorem~\ref{th:ext2} as a statement about $k$-positive maps, alongside a proof elucidating its nature as a Schur complement identity.

The two SM~\ref{app:ext_psd} and~\ref{app:ext_ppt} characterize the extremal geometry of the respective convex cones of PSD and PPT local extensions, which found no immediate use in the main text, but is a useful tool when optimizing over local extensions or making other global statements.

SM~\ref{app:rho33} discusses several peculiar geometric properties exhibited by $\rho^{3\times 3}$, the starting point of the construction in Theorem~\ref{th:4x5ex}. Afterward, in SM~\ref{app:SingleExtInsuff}, we prove that the intermediate steps before obtaining $\rho^{4\times5}$ remain SN $2$ states.
Finally, in SM~\ref{app:scaling}, we introduce new states $\rho^{(k)}$, which for $k\le 5$ have Schmidt number $k$ in dimensions $(2k-1)\times(2k -1)$.

\section{Proof for the Schur-complement peel-off theorem}
\label{app:SCpeel}
We translate our strengthened projection property in Theorem~\ref{th:ext2}  to a statement about $k$-positive maps, resulting in a Schur complement identity.
\begin{theorem}[Schur Complements for $k$-positivity]
Let $W_k\in\bounded{m}{n}$ be the Choi matrix of a $k$-positive map, with $k>1$, that is, a Schmidt number $\le k+1$ witness. 
Now, split $W_k$ by singling out a $1\times n$ subspace, i.e., find some operator $W_c \fsub W_k$. The associated Schur complement $W_{c\backslash e}$ is then the Choi matrix of a $(k-1)$-positive map.
\end{theorem}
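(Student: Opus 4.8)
The plan is to pass to the Choi duality picture and then run, on the witness side, the exact dual of the pure-state \emph{ansatz} used in the proof of Theorem~\ref{th:ext2}. Recall that an operator is the Choi matrix of a $j$-positive map iff $\langle\psi|\,\cdot\,|\psi\rangle\ge 0$ for every $\ket{\psi}$ of Schmidt rank at most $j$ (across the bipartition corresponding to the map's input and output); in particular the hypothesis on $W_k$ says precisely that $\langle\psi|W_k|\psi\rangle\ge 0$ whenever $\mathrm{SR}(\ket{\psi})\le k$, i.e.\ $W_k$ is a Schmidt-number-$\le(k+1)$ witness. Since every operator on $\mathbb{C}^{m-1}\otimes\mathbb{C}^{n}$ is automatically the Choi matrix of \emph{some} map $\mathcal{B}(\mathbb{C}^{m-1})\to\mathcal{B}(\mathbb{C}^{n})$, it suffices to show that $\langle\psi|W_{c\backslash e}|\psi\rangle\ge 0$ for every $\ket{\psi}\in\mathbb{C}^{m-1}\otimes\mathbb{C}^{n}$ with $\mathrm{SR}(\ket{\psi})\le k-1$.

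First I would put $W_k$ in a convenient block form. Writing $W_c\fsub W_k$ as $(A\otimes\mathbb{1})W_k(A^\dagger\otimes\mathbb{1})$ with $A^\dagger$ an isometry $\mathbb{C}^{m-1}\hookrightarrow\mathbb{C}^{m}$, extend $A^\dagger$ to a unitary $U$ on $\mathbb{C}^{m}$ whose last column is the adjoined vector $\ket{\bot}$. Conjugating the input factor of a $k$-positive map's Choi matrix by a unitary again produces such a Choi matrix (precomposition with a unitary conjugation preserves $k$-positivity), and $(U^\dagger\otimes\mathbb{1})W_k(U\otimes\mathbb{1})$ has $W_c$ as its top-left block; so without loss of generality $W_k=\begin{pmatrix}W_c&\chi\\\chi^\dagger&W_e\end{pmatrix}$ as in Eq.~\eqref{eq:block_form}, with $W_e=(\bra{\bot}\otimes\mathbb{1})W_k(\ket{\bot}\otimes\mathbb{1})$ and $W_{c\backslash e}=W_c-\chi W_e^{-1}\chi^\dagger$ (pseudoinverse). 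Note $W_e\ge 0$: since $k>1$ the map is in particular $1$-positive, so $W_k$ is block positive, hence nonnegative on the product vectors $\ket{\bot}\otimes\ket{\beta}$.

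Then, for $\ket{\psi}$ with $\mathrm{SR}(\ket{\psi})\le k-1$, I would use the \emph{ansatz}
\begin{equation}
  \ket{\tilde\psi}=\begin{pmatrix}\ket{\psi}\\-\,W_e^{-1}\chi^\dagger\ket{\psi}\end{pmatrix},
\end{equation}
the mirror image of the vector $\ket{\tilde\psi_i}$ in the proof of Theorem~\ref{th:ext2} with the roles of the ``core'' and ``extension'' blocks interchanged; the minus sign is what completes the square in the $\ket{\bot}$-block of $\langle\tilde\psi|W_k|\tilde\psi\rangle$. A one-line computation, using only the pseudoinverse identity $W_e^{-1}W_eW_e^{-1}=W_e^{-1}$ (so that no range inclusion $R(\chi^\dagger)\subseteq R(W_e)$ need be assumed), gives $\langle\tilde\psi|W_k|\tilde\psi\rangle=\langle\psi|W_{c\backslash e}|\psi\rangle$. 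Moreover, appending the single $\ket{\bot}$-row to the coordinate matrix of $\ket{\psi}$ raises its rank by at most one, so $\mathrm{SR}(\ket{\tilde\psi})\le\mathrm{SR}(\ket{\psi})+1\le k$; hence $k$-positivity of $W_k$ forces $\langle\tilde\psi|W_k|\tilde\psi\rangle\ge 0$, and therefore $\langle\psi|W_{c\backslash e}|\psi\rangle\ge 0$. This is exactly $(k-1)$-positivity of the map whose Choi matrix is $W_{c\backslash e}$.

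I expect the only genuinely delicate step to be the pseudoinverse bookkeeping: verifying that $\langle\tilde\psi|W_k|\tilde\psi\rangle=\langle\psi|W_{c\backslash e}|\psi\rangle$ holds verbatim when $W_e$ is singular — it does, precisely because $W_e^{-1}W_eW_e^{-1}=W_e^{-1}$, which is why the \emph{generalized} Schur complement is the right object here — together with confirming that the ``$+1$'' in the Schmidt-rank bound is genuinely the worst case. Both are routine; conceptually the statement is simply the dual of the Schur-complement identity behind Theorem~\ref{th:ext2}, transported from states to $k$-positivity witnesses.
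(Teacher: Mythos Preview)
Your argument is correct and shares the paper's core mechanism: test $W_k$ on the ansatz vector $\ket{\tilde\psi}=\bigl(\begin{smallmatrix}\ket{\psi}\\-W_e^{-1}\chi^\dagger\ket{\psi}\end{smallmatrix}\bigr)$ and invoke the ``Schmidt rank increases by at most one'' bound. The paper arrives at this same vector by first completing the square in $\ket{\psi^e}$ and then minimising, a manipulation that requires the side lemma $R(\chi^\dagger)\subseteq R(W_e)$ (which it proves from $k>1$ via a Schmidt-rank-$2$ contradiction). Your direct substitution is a genuine streamlining: as you note, the identity $\langle\tilde\psi|W_k|\tilde\psi\rangle=\langle\psi|W_{c\backslash e}|\psi\rangle$ needs only $W_e^{-1}W_eW_e^{-1}=W_e^{-1}$, so the range-inclusion step can be dropped for the bare $(k-1)$-positivity conclusion. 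What the paper's longer route buys in return is the explicit operator decomposition $W_k=\bigl(\begin{smallmatrix}W_{c\backslash e}&0\\0&0\end{smallmatrix}\bigr)+\bigl(\begin{smallmatrix}\chi W_e^{-1}\chi^\dagger&\chi\\\chi^\dagger&W_e\end{smallmatrix}\bigr)$ with the second summand PSD---and for \emph{that} positivity the inclusion $R(\chi^\dagger)\subseteq R(W_e)$ really is needed.
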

\begin{proof}
    Write out a pure state on $\ket{\psi} \in \H_\mathrm{A}\otimes\H_\mathrm{B}$ in the split form: $\ket{\psi} = \begin{pmatrix}
            \ket{\psi^c}\\
            \ket{\psi^e}
    \end{pmatrix}$. Then, 
    \begin{align}
    \begin{split}
        \bra{\psi}W_k\ket{\psi} = \bra{\psi^c}W_c\ket{\psi^c} +  \bra{\psi^e}W_e\ket{\psi^e} + \bra{\psi^c}\chi\ket{\psi^e}+\bra{\psi^e}\chi^\dagger\ket{\psi^c} = \\
        \bra{\psi^c}(W_c-\chi W_e^{-1} \chi^\dagger) \ket{\psi^c}+
        (\bra{\psi^e}+\bra{\psi^c}\chi W_e^{-1} ) W_e
        (\ket{\psi^e}+W_e^{-1}\chi^\dagger\ket{\psi^c} )    .
        \end{split}
    \label{eq:SquareSplit}
    \end{align}
    This manipulation is only valid if $R(\chi^\dagger)\subseteq R(W_e)$, indeed, this is always the case for any $k$-positive map with $k>1$, as otherwise, we could construct a Schmidt rank $2$ state with a negative expectation value.
    To see this, suppose the contraposition; that there exists a $\ket{\bot\gamma}$, such that $W_k\ket{\bot\gamma} = \chi\ket{\gamma}\neq 0$. Then, there exists an $\ket{\alpha\beta}\in \mathcal{H}\lA\otimes\mathcal{H}\lB$ with $\bra{\alpha\beta}\chi\ket{\gamma}\neq 0$, which implies that the superposition $\ket{\phi_z}=\ket{\bot\gamma}+z\ket{\alpha\beta}$ has $\bra{\phi_z}W\ket{\phi_z}<0$ for some $z$. 

    Minimizing the expression from Eq.~\ref{eq:SquareSplit} in $\ket{\psi^e}$ leaves $\bra{\psi^c}(W_c-\chi W_e^{-1} \chi^\dagger \ket{\psi^c}= \bra{\psi^c}W_{c\backslash e} \ket{\psi^c}$. We can deduce that $W_{c\backslash e}$ has a $(k-1)$-positive Choi map. Otherwise, a $\ket{\psi}$ with Schmidt rank $k$ would attain a negative expectation value. This allows to decompose $W$ into a Schmidt number $\le k$ witness $W_{c\backslash e}$ and a PSD operator by writing: 
    \begin{equation}
        W = \begin{pmatrix}
            W_{c\backslash e} & 0 \\
            0 & 0
        \end{pmatrix}+ \begin{pmatrix}
            \chi W_e^{-1} \chi^\dagger & \chi\\
            \chi^\dagger & W_e
        \end{pmatrix}.
    \end{equation}
    Here, we learned that in a Peel-Off theorem like in Ref.~\cite{yang2016all}, the PSD part can always be chosen as a low-rank flat extension of a $1\times n$-dimensional $W_e$, matching the subtracted component in the Schur-complement.
\end{proof}

\section{Convex geometry of PSD local extensions}
\label{app:ext_psd}
The set of PSD local extensions of a fixed (normalized) state $\rho_c$ consists of all local extensions 
$\{\rho\fsup \mu\rho_c|\rho\ge 0,~\mu\in \mathbbm{R}_+\}$
This set is closed under arbitrary conic combinations. In general, local extensions are usually constrained to preserve some property of an initial state, like positive semidefiniteness or the PPT property. 
We note that extreme points of the intersection between the set of arbitrary hermitian local extensions of $\rho_c$ and another convex cone $\mathcal{C}$ will generally not be extreme points of $\mathcal{C}$.

We denote a local extension by the triple $(\mu,\chi,\rho_{e\backslash c})$, where $\mu=\tr{\rho_c}$, as \begin{equation}
    \rho_{(\mu,\chi,\rho_{e\backslash c})} \equiv \begin{pmatrix}
       \rho_c & \chi \\
        \chi^\dagger &  \chi^\dagger\rho_c^{-1} \chi + \rho_{e\backslash c}
    \end{pmatrix}.
\end{equation}
A local extension $\rho_{(\mu,\chi,0)}$ is called \textit{flat},  i.e. $\rho_{e\backslash c} = 0$. 

\begin{theorem}
Every flat local extension $\rho$ is extremal and every (PSD) local extension can be expressed as the convex sum of a flat local extension and product states in the range of $\rho_{e\backslash c}$.
\end{theorem}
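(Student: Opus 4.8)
The plan is to prove the two assertions separately: first that flat local extensions are extreme points of the cone of PSD local extensions of $\mu\rho_c$, and second that every PSD local extension decomposes as a flat one plus product states supported on $R(\rho_{e\backslash c})$. Throughout I would use the parametrization $\rho_{(\mu,\chi,\rho_{e\backslash c})}$ introduced just above, which is faithful because $\mu = \tr{\rho_c}$ is fixed once $\rho_c$ is, $\chi$ is read off the off-diagonal block, and $\rho_{e\backslash c}$ is then the Schur complement $\rho_e - \chi^\dagger\rho_c^{-1}\chi$, which is PSD exactly when $\rho\ge 0$.

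For extremality of a flat extension $\rho_{(\mu,\chi,0)}$: suppose $\rho_{(\mu,\chi,0)} = \tfrac12(\sigma_1+\sigma_2)$ with $\sigma_j \ge 0$ local extensions of $\mu\rho_c$. Since the core block and the trace are fixed, each $\sigma_j$ has the same $\rho_c$ in its top-left block, so $\sigma_j = \rho_{(\mu,\chi_j,\tau_j)}$ with $\tau_j = \rho_{e\backslash c}^{(j)}\ge 0$. Comparing off-diagonal blocks gives $\tfrac12(\chi_1+\chi_2) = \chi$. The key observation is that the map $(\chi,\tau)\mapsto \rho_{(\mu,\chi,\tau)}$ is affine in $(\chi,\tau)$ jointly \emph{only} through the visible blocks; the relevant point is that $\rho_{e\backslash c}$ itself is an affine image of the extension (it is literally the Schur complement, hence of the form $\rho_e - \chi^\dagger\rho_c^{-1}\chi$, and under a convex combination the Schur complements satisfy $\rho_{e\backslash c} \le \tfrac12(\tau_1+\tau_2)$ by operator convexity of $X\mapsto \chi^\dagger\rho_c^{-1}\chi$ in $\chi$ — more precisely, by convexity of the map $\chi \mapsto \chi^\dagger\rho_c^{-1}\chi$ on the relevant range). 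Since $\rho_{e\backslash c}=0$ and $\tau_j\ge 0$, the inequality $0 = \rho_{e\backslash c} \le \tfrac12(\tau_1+\tau_2)$ together with $\tau_j\ge0$ forces $\tau_1=\tau_2=0$ \emph{provided} we also get $\chi_1=\chi_2$; and $\chi_1=\chi_2=\chi$ follows because equality in the convexity bound $\chi^\dagger\rho_c^{-1}\chi \le \tfrac12(\chi_1^\dagger\rho_c^{-1}\chi_1+\chi_2^\dagger\rho_c^{-1}\chi_2)$ for the affine midpoint forces $\rho_c^{-1/2}\chi_1 = \rho_c^{-1/2}\chi_2$, hence $\chi_1=\chi_2$ on $R(\rho_c)$, which is all that matters since $R(\chi_j)\subseteq R(\rho_c)$. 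Therefore $\sigma_1=\sigma_2$, so $\rho_{(\mu,\chi,0)}$ is extremal.

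For the decomposition: given an arbitrary PSD local extension $\rho_{(\mu,\chi,\rho_{e\backslash c})}$, write a spectral decomposition $\rho_{e\backslash c} = \sum_k \lambda_k \ketbra{v_k}{v_k}$ with $\lambda_k > 0$ and $\ket{v_k}\in\H_\mathrm{B}$ (these are supported in $R(\rho_{e\backslash c})\subseteq\H_\mathrm{B}$). Then each $\ket{\bot}\otimes\ket{v_k}$ is a product vector, and
\begin{equation}
\rho_{(\mu,\chi,\rho_{e\backslash c})} = \rho_{(\mu,\chi,0)} + \sum_k \lambda_k\, \ketbra{\bot}{\bot}\otimes\ketbra{v_k}{v_k},
\end{equation}
since the added term only affects the bottom-right block, shifting $\chi^\dagger\rho_c^{-1}\chi$ by exactly $\rho_{e\backslash c}$ to recover $\rho_e$. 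Rescaling to make this a genuine convex combination (dividing through by the total trace) is cosmetic. This exhibits $\rho$ as a convex combination of the flat extension $\rho_{(\mu,\chi,0)}$ and the product states $\ketbra{\bot}{\bot}\otimes\ketbra{v_k}{v_k}$, which are themselves (trivially PSD) local extensions of the zero core with support in $R(\rho_{e\backslash c})$, as claimed.

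The main obstacle is the extremality argument: one must be careful that "extreme point of the set of PSD local extensions of $\mu\rho_c$" refers to the affine slice at fixed $\rho_c$ (fixed top-left block and fixed trace), not the full PSD cone — in the full cone nothing with nonzero rank is extremal. Once that is pinned down, the real content is the strict-convexity/equality-condition step showing $\chi_1=\chi_2$, which relies on $R(\chi_j)\subseteq R(\rho_c)$ (guaranteed by $\sigma_j\ge0$, as noted after Eq.~\eqref{eq:block_form}) so that $\rho_c^{-1/2}$ is injective on the relevant subspace. The decomposition half is essentially a one-line block computation and should pose no difficulty.
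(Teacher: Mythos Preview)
Your extremality argument has the Schur-complement inequality backwards, and as written the logic does not close. With the top-left block $\rho_c$ fixed, the map $(\chi,\rho_e)\mapsto \rho_{e\backslash c}=\rho_e-\chi^\dagger\rho_c^{-1}\chi$ is \emph{concave} (linear in $\rho_e$, and $-\chi^\dagger\rho_c^{-1}\chi$ is operator concave in $\chi$). Hence for $\rho_{(\mu,\chi,0)}=\tfrac12(\sigma_1+\sigma_2)$ one gets $0=\rho_{e\backslash c}\ge \tfrac12(\tau_1+\tau_2)$, not $\le$. Your stated inequality $0\le \tfrac12(\tau_1+\tau_2)$ is vacuous given $\tau_j\ge0$ and cannot force $\tau_j=0$; this is why you found yourself needing the circular clause ``provided we also get $\chi_1=\chi_2$''. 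With the correct direction the argument is clean and non-circular: $0\ge\tfrac12(\tau_1+\tau_2)\ge0$ gives $\tau_1=\tau_2=0$ and simultaneously equality in the convexity bound $\chi^\dagger\rho_c^{-1}\chi\le\tfrac12(\chi_1^\dagger\rho_c^{-1}\chi_1+\chi_2^\dagger\rho_c^{-1}\chi_2)$, from which $(\rho_c^{-1/2}(\chi_1-\chi_2))^\dagger(\rho_c^{-1/2}(\chi_1-\chi_2))=0$ and hence $\chi_1=\chi_2$ on $R(\rho_c)$. Also note the paper works in the \emph{cone} of local extensions (variable $\mu$), not the slice you describe; your argument still goes through there because the perspective $(\mu,\chi)\mapsto \tfrac{1}{\mu}\chi^\dagger\hat\rho_c^{-1}\chi$ is jointly convex, and the degenerate case $\mu_j=0$ forces $\chi_j=0$ and then $\tau_j=0$ immediately.

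Once repaired, your route is genuinely different from the paper's. The paper does not use operator convexity at all: it computes the range of a flat extension as the graph $\{\,|\psi^c\rangle+\chi^\dagger\rho_c^{-1}|\psi^c\rangle:\ |\psi^c\rangle\in R(\rho_c)\,\}$ and observes that $R(\rho_{(\gamma,\xi,0)})\subseteq R(\rho_{(\mu,\chi,0)})$ forces $\xi^\dagger\rho_c^{-1}=\chi^\dagger\rho_c^{-1}$, so no distinct flat extension can be subtracted while preserving positivity. Your convexity argument is more self-contained and avoids the range bookkeeping; the paper's range description, on the other hand, is reused later (in the PPT extremality analysis of SM~\ref{app:ext_ppt}), which is what it buys. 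Your decomposition half is correct and coincides with the paper's: both simply peel off $\ketbra{\bot}{\bot}\otimes\rho_{e\backslash c}$ and spectrally decompose it into product states.
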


\begin{proof} 
    We first introduce the notation  $\mathbb{1}(S)$ for the unique self-adjoint projector on a linear subspace $S = \mathbb{1}(S)(\H_\mathrm{A}\otimes \H_\mathrm{B}) \subseteq \H_\mathrm{A}\otimes \H_\mathrm{B}$.
    For general PSD matrices $A$, $B$ and sufficiently small $\epsilon\in \mathbb{R}_+$,  $A-\epsilon B\ge 0$ iff $R(B)\subseteq R(A)$. For a flat local extension defined by $(\mu,\chi,0)$, it holds that
    \begin{equation}
        R(\rho) = R(\begin{pmatrix}
               \mathbb{1}(R(\rho_c))\\
               \chi^\dagger\rho_c^{-1}
                \end{pmatrix}),
        \end{equation}
        where we can immediately infer that every state $\ket{\psi^c}\in R(\rho_c)$ is assigned an  extension $\ket{\psi^c}+\chi^\dagger\rho_c^{-1}\ket{\psi^c}$, which uniquely completes it to an element of $R(\rho)$.  This implies directly that $R(\rho_{(\mu,\chi,0)})$ contains $R(\rho_{(\gamma,\xi,0)})$ iff $\chi^\dagger\rho_c^{-1} = \xi^\dagger\rho_c^{-1}$.  Hence, any flat local extension ceases to be PSD when subtracting any other (flat) local extension.  This defines the set of extreme points of the set of PSD local extensions: Flat local extensions of the form $\rho_{(\mu,\chi,0)}$ and rank $1$ projectors, which formally also are local extensions $\rho_{(0,0,\ket{\phi}\bra{\phi})}$.
\end{proof}
This concludes our discussion of the cone of PSD local extensions, which allows us to intersect with the PPT cone, yielding the cone of PPT local extensions.  
\section{Convex geometry of PPT local extensions}
\label{app:ext_ppt}
Next, we give an algebraic characterization of the extremal points of the cone of PPT local extensions, helping the discovery of new high Schmidt number PPT states.
If subsystem $\mathcal{H}_\mathrm{A}$ is locally extended, provided we define, without loss of generality, $\ket{\bot} = \ket{\bot}^*$, denote by $\rho^{\Ta}_{e\backslash c}\equiv \rho_e -\chi^{\Ta}\rho_c ^{\Ta} (\chi^{\dagger})^{\Ta}$ and the partial transposition takes the form 
\begin{equation}
 \rho^{\Ta} = 
\begin{pmatrix}
        \rho_c ^{\Ta}& (\chi^{\dagger})^{\Ta}\\
        \chi^{\Ta} & \rho_e        
\end{pmatrix}.
\end{equation}

\begin{theorem} A PPT local extension $\rho$ is extremal if the following intersection $R(\rho_{c})_{\mathrm{AB}}\otimes R(\rho_{e\backslash c})_{\mathrm{C}}\cap R(\rho^{\Ta}_{c})_{\mathrm{AC}}\otimes R(\rho^{\Ta}_{e\backslash c})_\mathrm{B}$ is one-dimensional and 
$R(\rho_{e\backslash c})\cap R(\left(\rho^{\Ta}\right)_{e\backslash c}) = \{0\}$.
\end{theorem}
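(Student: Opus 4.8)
The plan is to use the standard perturbation criterion for extreme rays: $\rho$ is extremal in the cone of PPT local extensions of $\rho_c$ iff every Hermitian $\Delta$ with $\rho\pm\epsilon\Delta$ in that cone for all small $\epsilon>0$ is a multiple of $\rho$. Every local extension has core block proportional to $\rho_c$, so after subtracting the appropriate multiple of $\rho$ I may assume $\Delta$ has vanishing core block, $\Delta=\begin{pmatrix}0&\delta\\\delta^\dagger&\delta_e\end{pmatrix}$; since such a $\Delta$ is a multiple of $\rho$ only if it is zero, the whole task reduces to showing $\delta=0$ and $\delta_e=0$.

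First I would translate the two positivity constraints. A short PSD/Schur-complement argument shows that "$\rho\pm\epsilon\Delta\geq 0$ for small $\epsilon$" is equivalent to "$\Delta$ is supported on $R(\rho)$", and similarly for $\rho^{\Ta}$. Reading this block-wise through \eq{eq:block_form}, the off-diagonal part yields $R(\delta)\subseteq R(\rho_c)$ and $R(\delta^\dagger)\subseteq R(\rho_{e\backslash c})$, together with the partial-transpose analogues $R((\delta^\dagger)^{\Ta})\subseteq R(\rho_c^{\Ta})$ and $R(\delta^{\Ta})\subseteq R((\rho^{\Ta})_{e\backslash c})$. Passing to the tripartite Choi-dual form $\ket{\tilde\delta}=\delta\ket{\Gamma}$ of \eq{eq:linext}, in which partial transposition on $\mathrm A$ acts as the swap of $\mathrm B$ and $\mathrm C\equiv\bar{\mathrm B}$, these four conditions say exactly that $\ket{\tilde\delta}_{\mathrm{ABC}}$ lies simultaneously in $R(\rho_c)_{\mathrm{AB}}\otimes R(\rho_{e\backslash c})_{\mathrm C}$ and in $R(\rho_c^{\Ta})_{\mathrm{AC}}\otimes R((\rho^{\Ta})_{e\backslash c})_{\mathrm B}$. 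By the first hypothesis this intersection is one-dimensional, so $\delta=z\,\omega$ for the fixed $\omega$ generating it and some $z\in\CC$.

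It then remains to kill $z$, after which $\delta_e$ follows. The diagonal blocks contribute the Schur-complement conditions $\ker\rho_{e\backslash c}\subseteq\ker(\delta_e-\delta^\dagger\rho_c^{-1}\chi)$ from $\rho\geq 0$ and $\ker(\rho^{\Ta})_{e\backslash c}\subseteq\ker(\delta_e-\delta^{\Ta}(\rho_c^{\Ta})^{-1}(\chi^\dagger)^{\Ta})$ from $\rho^{\Ta}\geq 0$, i.e.\ $\delta_e$ is prescribed on $\ker\rho_{e\backslash c}$ and on $\ker(\rho^{\Ta})_{e\backslash c}$ by explicit expressions linear in $z$. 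The second hypothesis $R(\rho_{e\backslash c})\cap R((\rho^{\Ta})_{e\backslash c})=\{0\}$ is equivalent to $\ker\rho_{e\backslash c}+\ker(\rho^{\Ta})_{e\backslash c}=\CC^n$, so these two prescriptions already determine $\delta_e$ completely as a function of $z$. Evaluating the Hermiticity requirement $\langle w'|\delta_e|w\rangle=\overline{\langle w|\delta_e|w'\rangle}$ on $\ket w\in\ker\rho_{e\backslash c}$, $\ket{w'}\in\ker(\rho^{\Ta})_{e\backslash c}$ and substituting the two prescriptions collapses it to $\bar z\,\langle w'|\big(\omega^\dagger\rho_c^{-1}\chi-\chi^{\Ta}(\rho_c^{\Ta})^{-1}(\omega^\dagger)^{\Ta}\big)|w\rangle=0$; since the bracketed operator cannot vanish identically between the two kernels (this is precisely where the one-dimensionality of the intersection and the complementarity of the kernels are used together), we get $z=0$, hence $\delta=0$. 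Feeding $\delta=0$ back into the two Schur conditions gives $\ker\delta_e\supseteq\ker\rho_{e\backslash c}+\ker(\rho^{\Ta})_{e\backslash c}=\CC^n$, so $\delta_e=0$ and $\Delta=0$, establishing extremality.

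I expect the main obstacle to be exactly this closing reduction. Two points need care: the bookkeeping showing that "partial transpose on $\mathrm A$" really acts as "swap $\mathrm B\leftrightarrow\mathrm C$" on the Choi-dual vector, so that the two tensor-product subspaces appearing in the hypothesis come out in the stated form; and, more delicately, proving that the residual linear form in $z$ coming from Hermiticity is not identically zero — equivalently, that no Hermitian $\delta_e$ can be consistent with a nonzero $\delta=z\omega$. Arguing this intrinsically, from the two range/kernel hypotheses, rather than through an ad hoc case analysis on $\omega$, is the genuinely technical step.
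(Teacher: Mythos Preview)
Your route mirrors the paper's: both reduce extremality to range/kernel constraints on the off-diagonal block via Schur-complement bookkeeping and then translate those constraints into membership of the Choi-dual vector in the stated tensor-product intersection. The paper works with a PSD sub-extension $\sigma=\rho_{(\mu',\xi,\sigma_{e\backslash c})}$ in the face of $\rho$ and derives $\chi^\dagger-\xi^\dagger\in\mathcal{B}(R(\rho_c),R(\rho_{e\backslash c}))$ together with its partial-transpose analogue, whereas you use a two-sided Hermitian perturbation with vanishing core block; the resulting linear conditions on $\chi-\xi$ and on your $\delta$ coincide.

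The gap you isolate is genuine, and the paper does not close it either. After showing that $\delta$ (respectively $\chi-\xi$) lies in a one-complex-dimensional space, one still has to rule out the remaining scalar $z$ together with a compatible Hermitian $\delta_e$; the paper simply asserts the final equivalence once the range constraints are in hand, without carrying out that last reduction. Your Hermiticity test $\bar z\,\langle w'|\big(\omega^\dagger\rho_c^{-1}\chi-\chi^{\Ta}(\rho_c^{\Ta})^{-1}(\omega^\dagger)^{\Ta}\big)|w\rangle=0$ is the right shape of obstruction, but the non-vanishing of the bracket does not follow from the two stated hypotheses alone. To force $z=0$ you would also need to feed in the companion range condition $R(\delta_e-\chi^\dagger\rho_c^{-1}\delta)\subseteq R(\rho_{e\backslash c})$ and its partial-transpose twin, together with the consistency constraint on $\ker\rho_{e\backslash c}\cap\ker(\rho^{\Ta})_{e\backslash c}$. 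So your proposal and the paper's own argument sit at the same level of completeness on this final step.
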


\begin{proof}
Using Schur's complement, $\rho_e \ge \chi^{\dagger}\rho_c  \chi$ and $\rho_e \ge\chi^{\Ta}\rho_c ^{\Ta} (\chi^{\dagger})^{\Ta}$. Hence, given a fixed $\chi$, there is not a unique $\rho_e$ that defines a minimal local extension like in the PSD case, since the Loewner order of PSD matrices is not a total order~\cite{kadison1951order}. We can, however, directly conclude that for every \textit{extremal PPT extension }
\begin{equation}
R\left(\rho_{e\backslash c}\right)\cap R\left(\left(\rho^{\Ta}\right)_{e\backslash c}\right) = \{0\}.
\label{eq:trivint}
\end{equation}
{
This is not sufficient; the qubit state $\ket{\phi_+}\bra{\phi_+}+\frac{1}{2}\ketbra{01}{01}+\frac{1}{2}\ketbra{10}{10}$ is a counterexample, as it is a local extension of $1\otimes\mathbbm{1}/2$ and satisfies Eq.~\ref{eq:trivint}, but decomposes as $\frac{1}{2}(\ketbra{\phi_+}{\phi_+}+\ketbra{\psi_+}{\psi_+})+ \frac{1}{2}(\ketbra{\phi_+}{\phi_+}+\ketbra{\psi_-}{\psi_-})$.
}
To derive the full criterion, observe that $R(\rho_{(\mu',\xi,\sigma_{e\backslash c})})\subseteq R(\rho_{(\mu,\chi,\rho_{e\backslash c})})$ when the following invertible matrix $M$ exists:
\begin{equation}
    \exists M' ~\mathrm{s.t.} ~ \rho_{(\mu',\xi,\sigma_{e\backslash c})}M'= \rho_{(\mu,\chi,\rho_{e\backslash c})} \iff  \exists M ~\mathrm{s.t.} ~      \begin{pmatrix}
        \mathbb{1}_c & 0 \\
        \xi^\dagger\rho_c^{-1} & \sigma_{e\backslash c}
    \end{pmatrix} M =
    \begin{pmatrix}
        \mathbb{1}_c & 0 \\
        \chi^\dagger\rho_c^{-1} & \rho_{e\backslash c}
    \end{pmatrix}.
\end{equation}
To obtain the second identity, consider the invariance $R(X) = R(XY)$ for an arbitrary operator $X$ and an invertible $Y$. Without losing generality, we can further choose:  
\begin{equation}
     M = \begin{pmatrix}
        \mathbb{1}_c & 0 \\
        C & D
    \end{pmatrix} \implies 
    \begin{pmatrix}
        \mathbb{1}_c & 0 \\
        \xi^\dagger\rho_c^{-1}+\sigma_{e\backslash c}C & \sigma_{e\backslash c} D
    \end{pmatrix} = 
    \begin{pmatrix}
        \mathbb{1}_c & 0 \\
        \chi^\dagger\rho_c^{-1} & \rho_{e\backslash c}
    \end{pmatrix}
\end{equation}
and hence conclude from $(\chi^\dagger-\xi^\dagger)\rho_c^{-1} = \sigma_{e\backslash c}C = \rho_{e\backslash c}DC$ that
\begin{equation}
    \Delta^\dagger \equiv \chi^\dagger-\xi^\dagger \in  \mathcal{B}\left(R\left(\rho_{c}\right),R\left(\rho_{e\backslash c}\right)\right),
\end{equation}
giving a new condition. The same holds for the partial transpose, yielding the further identity.
\begin{equation}
    (\Delta)^{\Ta} = \chi^{\Ta}-\xi^{\Ta} \in  \mathcal{B}\left(R\left(\rho^{\Ta}_{c}\right),R\left(\rho^{\Ta}_{e\backslash c}\right)\right),
\end{equation}
We can eliminate the partial transposes like in Eq.~\eqref{eq:linext},  showing that a PPT local extension is extremal, if and only if the 
vector spaces $R(\rho_{c})_{\mathrm{AB}}\otimes R(\rho_{e\backslash c})_{\mathrm{C}}$ and $R(\rho^{\Ta}_{c})_{\mathrm{AC}}\otimes R(\rho^{\Ta}_{e\backslash c})_\mathrm{B}$ have a nonzero one-dimensional
intersection and Eq.~\eqref{eq:trivint} holds.
\end{proof}

\section{Characterization of $\rho^{3\times 3}$}
\label{app:rho33}
Previous entanglement criteria for grid states used product states in the kernel of the grid state, usually corresponding to sites not incident to any edges in the corresponding graph. 
Given enough product states in the kernel, some minors used in defining the variety of SR $k$ states in $R(\rho)$ turn into monomials, simplifying considerably.
These minors are then very useful in proofs based on direct algebraic elimination~\cite{lockhart2018entanglement,ghimire2023quantum,krebs2024high}. 
We define $\rho^{3\times 3}$, an example of a grid state without a product state in the kernel, yet with entanglement amenable to the range criterion by the following pure states
\begin{equation}
    \ket{e_0} = {\ket{00} +\ket{11}+\ket{22}},\  \ 
     \ket{e_1} = {\ket{01}+\ket{12}},\  \  \ket{e_2} ={\ket{10}-\ket{21}},\ \ 
       \ket{e_3} ={\ket{02}},\ \ \ket{e_4} = 
    {\ket{20}},
\end{equation}
and weights $r_i = (1,1,1,3,3)$ as $\rho^{3\times 3} = \sum_i r_i \ket{e_i}\bra{e_i}$, equivalently given in Fig.~\ref{fig:45be}.
We determine the 
range coordinate matrix as: 
\begin{equation}
    \mathbf{\Psi} = \{\mean{ij|\psi }\}_{ij} = 
    \begin{pmatrix}
        \psi_{00} &  \psi_{01} & \psi_{02}\\
        -\psi_{10} &  \psi_{00} & \psi_{01}\\
        \psi_{20} &  \psi_{10} & \psi_{00}\\
    \end{pmatrix},
\end{equation}
then we consider two of the minor equations satisfied by all product states
\begin{equation}
    \left|\begin{matrix}
        \psi_{00} &  \psi_{01}\\
        -\psi_{10} &  \psi_{00} 
    \end{matrix} \right| = \psi_{00}^2+\psi_{01}\psi_{10}=0,~
    \left|\begin{matrix}
        \psi_{00} &  \psi_{01}\\
        \psi_{10} & \ \  \psi_{00} 
    \end{matrix} \right| = \psi_{00}^2-\psi_{01}\psi_{10}=0,
\end{equation}
which together imply that $\psi_{00}^2=0$ and $\psi_{01}\psi_{10}=0$. 
The identity $\psi_{00}=0$ already establishes that a component of $\rho^{3\times 3}$ cannot be spanned using product states, hence $\rho^{3\times 3}$ is entangled. Plugging $\psi_{00}=0$ in the coordinate matrix gives two further equations, $\psi_{01}^2=0$ and $\psi_{10}^2=0$. Hence, $\ket{e_3}$ and $\ket{e_4}$ are the only two product states in the range of $\rho^{3\times 3}$, and neither of them has a partial transpose in the range of $(\rho^{3\times 3})^{\Tb}$, which makes $\rho^{3\times 3}$ an edge state~\cite{lewenstein2000optimization}. While evaluating the range criterion is straightforward for this state, it has a higher formal complexity than the algebraic approaches in~\cite{lockhart2018entanglement,ghimire2023quantum,krebs2024high}, since two minors need to be combined to give the required monomial identities $\psi_{00}^2=0$ and $\psi_{01}\psi_{10}=0$. Furthermore, the partial transpose has a similar simple form:
\begin{equation}
    \ket{f_0} =  
    {\ket{02}+\ket{11}-\ket{20}},~~
\ket{f_1}=    {\ket{02}+\ket{20}},~~
\ket{f_2}=      {\ket{01}+\ket{10}},~~ \ket{f_3}=  {\ket{12}+\ket{21}},~~
\ket{f_4}=     {\ket{00}},~~
\ket{f_5}=      {\ket{22}},
\end{equation}
with weights $\Tilde{r}_i = (1,2,1,1,1,1)$ as
\begin{equation}
    (\rho^{3\times 3})^{\Tb} = \sum_i \Tilde{r}_i\ket{f_i}\bra{f_i}, 
\end{equation}
hence the rank changes under partial transposition, in this case from $5$ to $6$. 
We also observe that when setting $\psi_{02}=\psi_{20}=0$, the $3\times 3$ minor $\det \mathbf{\Psi}$ becomes
\begin{equation}
    \left|\begin{matrix}
        \psi_{00} &  \psi_{01} & 0\\
        -\psi_{10} &  \psi_{00} & \psi_{01}\\
        0 &  \psi_{10} & \psi_{00}\\
    \end{matrix}\right| = \psi_{00}^3.
\end{equation}
This property hints at the existence of Schmidt number $3$ extensions of $\rho^{3\times 3}$, which are confirmed in Theorem~\ref{th:4x5ex}.

\section{A single extension of $\rho^{3\times 3}$ is insufficient}
As discussed in the main text, after performing a trivial extension on subsystem A of $\rho^{3\times 3}$, and a single nontrivial extension with $\ketbra{20}{3}$ on subsystem B, the resulting state $\rho^{4\times 4}$ still has SN $2$. We shall prove this by finding a suitable separable projection on the $\rho^{4\times 4}$ and applying Theorem~\ref{th:ext2}.
\label{app:SingleExtInsuff}
\begin{figure}[H]
    \centering
    \includegraphics[width=0.5\linewidth]{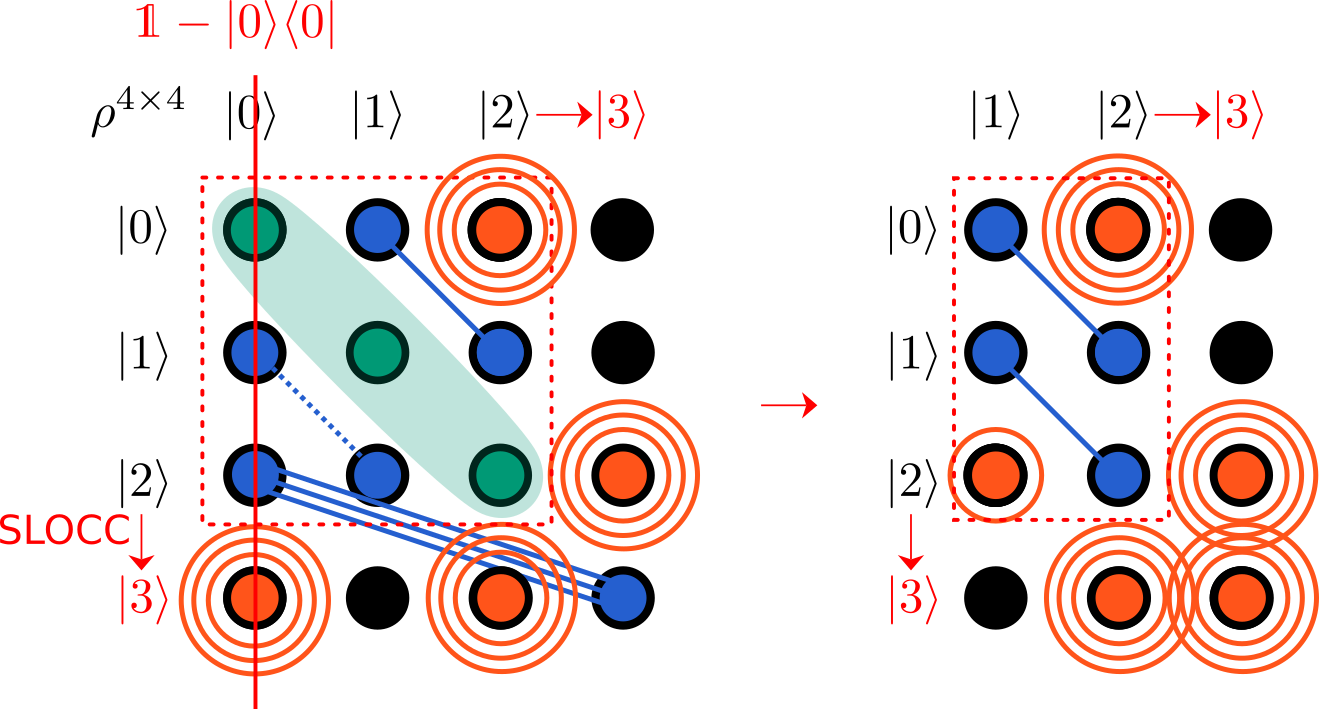}     
    \caption{l.: Grid state obtained by performing a first extension, later projection indicated by red line. r.: Grid state after applying the local projection. Decomposes into one $2\times 3$ (indicated by the dashed red box) and several $1\times 1$ separable local blocks.}
    \label{fig:sep_proj}
\end{figure}
After performing the single local extension with $\ketbra{20}{3}$ and projecting with $\mathbbm{1}\lB-\ket{0}\bra{0}\lB$, the remaining state is a convex mixture of product states $\ket{30}$, $\ket{32}$, $\ket{23}$, $\ket{33}$  and a $2\times 3$ PPT state supported on the local block $\mathrm{span}(\ket{0},\ket{1},\ket{2}) \otimes \mathrm{span}(\ket{1},\ket{2})$, as indicated in Fig.~\ref{fig:sep_proj}. The latter, according to the Peres-Horodecki criterion~\cite{horodecki1997separability} is also separable, showing overall separability of the projected state. According to Theorem~\ref{th:ext2}, the existence of this separable projection implies that $\rho^{4\times 4}$ only has the Schmidt number $2$.

\section{Schmidt number scaling}
\label{app:scaling}
Here, we give $\rho^{(c)}$ explicitly in terms of its spectrum and eigenvectors, which coincide with edges in the grid state formalism, as can be seen in Fig.~\ref{fig:scaling} on the example of $\rho^{(4)}$.
The states discussed here have a single large hyperedge supported on $k$ sites, and otherwise only edges supported on $\le 2$ sites. 
\begin{figure}[H]
    \centering
    \includegraphics[width=0.8\linewidth]{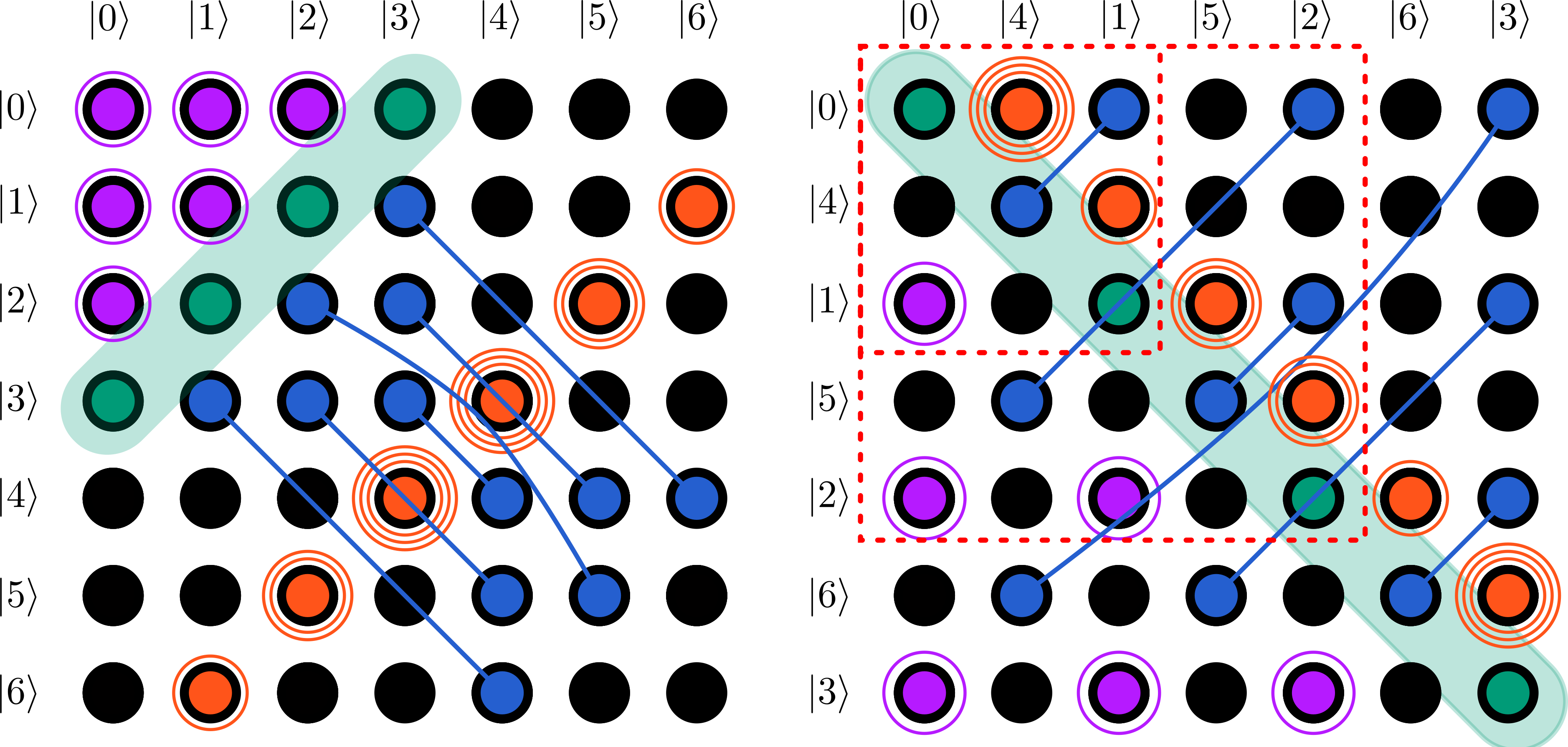}
\caption{Graph representing $\rho^{(4)}$, with $\ket{\alpha}$, $\ket{\beta_{ij}}$, $\ket{\gamma_{ij}}$, $\ket{\delta_i}$ indicated as colors green, blue, purple, orange, respectively. The representation on the left side prioritizes contiguous empty sites, while the representation on the right-hand side emphasizes the extension structure.}
    \label{fig:scaling}
\end{figure}

We explicitly write $\rho^{(k)}\in\bounded{2k-1}{2k-1}$ for any $k$ in terms of its edges, which are also eigenvectors:
\begin{align}
    &\,\ket{\alpha} = \sum_{i=0}^{k-1} \ket{i,k-1-i}, &\ket{\beta_{ij}}&=\sum \ket{i,j} + \ket{2k-1-j,2k-1-i}~(i+j\ge k,~i<k,~j<k),\nonumber\\
    &\,\ket{\gamma_{ij}} = \ket{i,j}~(i+j<k-1), &\ket{\delta_i}\hphantom{_j}&=\ket{i,2k-1-i}~(1\le i\le 2k-2).
\end{align}
\begin{equation}
    \rho^{(k)} = \ketbra{\alpha}{\alpha} + \sum_{ij} \ketbra{\beta_{ij}}{\beta_{ij}} + \sum_{ij} \ketbra{\gamma_{ij}}{\gamma_{ij}} + \sum d_i \ketbra{\delta_i}{\delta_i}.
\end{equation}
We proceed to show that there exist $d_i\ge 0$,  such that $\rho^{(k)}$ is PPT. To this end, we give an explicit decomposition of $\rho^{(k),T_\mathrm{A}}$ into Schmidt rank $\le 2$ quantum states.

To achieve this, we utilize the sparsity of $\rho^{(k)}$. We only need to consider off-diagonal elements from $\ketbra{\alpha}{\alpha}$ and from $\ketbra{\beta_{ij}}{\beta_{ij}}$. 
The first set of off-diagonal elements  has partial transposes $\ketbra{i,j}{k-1-j,k-1-i}$ for $i\le k$, $i\le j$, which taking into account the diagonal elements from $\ket{\beta_{ij}}$ and $\ket{\gamma_{ij}}$ combine into size $2$ blocks $\ketbra{\eta_{ij}}{\eta_{ij}}$ with $\ket{\eta_{ij}} := \ket{i,j}+\ket{k-1-j,k-1-i}$. 
The second set of off-diagonal elements has partial transposes $\ketbra{i,2k-1-i}{j,2k-1-j}$, for $i+j\ge k $, $i<k$, $j<k$,  which connect together with the $d_i$ into another block $D$ of size $2k$. 
By assigning sufficiently high values to the $d_i$, we guarantee that $D$ is PSD. In the present case, the requisite minimal values are $d_i \equiv \min(i,2k-1-i)$, as then $D$ has a kernel element $\ket{\Omega}\equiv \sum_{i=1}^{k-1} (\ket{i,2k-1-i}-\ket{2k-1-i,i})$, with $\mean{\Omega|\delta_i}\neq 0$. Hence,  $D-\epsilon\ketbra{\delta_i}{\delta_i}\not\ge 0$ for all $\epsilon>0$, thus the $d_i$ are minimal~\cite{ghimire2023quantum}. 
To prove $\mathrm{SN}(\rho^{(k),T_\mathrm{A}})=2$, we can further define $\ket{\mu_{ij}}:= \ket{i,2k-1-i} + \ket{j,2k-1-j}$, and observe that $D = \sum_{i+j\ge k,~i<k,~j<k} \ketbra{{\mu_{ij}}}{{\mu_{ij}}}$, with the remainder of $\rho^{(k),T_\mathrm{A}}$ decomposing into the $\ket{\eta_{ij}}$-components and further product states appearing as a diagonal block.

We determine the $\mathrm{SN}(\rho^{(k)})$ for $k\le 5$ with the range criterion. Note that this result is completely independent from the value of the $d_i$, since the range does not depend on it. In general, this implies that for any given value of $d_i$, the highest attainable $\rho^{(k)}$ is bounded, in particular, choosing the minimal values of the $d_i$ prohibits further extension in the above manner.
We parametrize $R(\rho^{(k)})$ with the coordinates 
\begin{equation}
     \ket{\psi(\alpha,\beta,\gamma,\delta)}:={\alpha}\ket{\alpha} + \sum_{ij} {\beta_{ij}}\ket{\beta_{ij}} + \sum_{ij} {\gamma_{ij}}\ket{\gamma_{ij}} + \sum_i  {\delta_i}\ket{\delta_i} \in R(\rho^{(k)})
\end{equation}
and like in the proof of Theorem~\ref{th:4x5ex}, we consider the ideal generated by $k\times k$ minors, which has the Schmidt rank $(k-1)$ states in this space as its variety. Since all minors considered here have integer coefficients, we can choose the ring of polynomials with rational coefficients as the base ring. This choice allows exact computer-assisted proofs of ideal membership, avoiding the issues accompanying floating-point arithmetic.  To this end, we confirm ideal membership of $\alpha^k$ for $k\le 5$ with the computer-algebra system SAGE~\cite{sagemath}, which computes a Groebner basis for the system of minors and proceeds to reduce $\alpha^k$ by it. 
Similar to the example in Theorem~\ref{th:4x5ex}, considering a subset of minors is already sufficient: For all examples tested, we can disregard any minor containing one of the $\delta_i$.
With the Nullstellensatz, this proves that all Schmidt rank $(k-1)$ states in $R(\rho^{(k)})$ need to obey $\alpha=0$, hence are orthogonal to $\ket{\alpha}\in R(\rho^{(k)})$. Thus, the $\rho^{(k)}$ have Schmidt number $k$ for $k\le 5$, providing the lowest dimensional known examples of Schmidt number $4$ and $5$, improving over the previous examples in~\cite{pal2019class,krebs2024high}, which had dimensions $2k\times 2k$ and $(2k-1) \times \frac{k(k+1)}{2}$ respectively.

\twocolumngrid
%

\end{document}